\documentclass[journal]{IEEEtran}
\usepackage{amsmath}
\usepackage{bm}
\usepackage{subfigure}
\usepackage{graphicx,epsfig,balance}
\usepackage{cite}
\usepackage{color}
\usepackage{multirow}
\usepackage{amsfonts,amssymb}
\usepackage{amsmath, amsthm}
\usepackage{booktabs}
\usepackage{threeparttable}
\usepackage[ruled]{algorithm2e}
\newtheorem{theorem}{Theorem}

\newtheorem{remark}{Remark}

\usepackage{diagbox}
\usepackage{soul}

\soulregister\cite7 
\soulregister\citep7 
\soulregister\citet7 
\soulregister\ref7 
\soulregister\pageref7 
\soulregister\item7 
\soulregister\itemize7 
\soulregister\align7 
\soulregister\split7 
\soulregister\figure7 
\soulregister\caption7 
\usepackage{tcolorbox}
\usepackage{amsmath,amssymb}
\usepackage{mathrsfs}
\hyphenpenalty=5000
\usepackage[letterpaper, top=1.92cm, bottom=3.2cm, left=1.92cm, right=1.92cm]{geometry}
\usepackage[ruled]{algorithm2e}
\usepackage{algpseudocode}
\usepackage{xcolor}
\usepackage{color}
\definecolor{commentcolor}{RGB}{169,169,169} 
\algrenewcommand\algorithmiccomment[1]{\textcolor{commentcolor}{#1}}
\usepackage{setspace}
\usepackage{makecell}
\usepackage{url} 
\usepackage{hyperref}
\usepackage{orcidlink}
\usepackage{tikz}

\usepackage{soul}
\newtheorem{assumption}{Assumption}
\newtheorem{definition}{Definition}
\begin{document}
	\title{Robust Deep Joint Source-Channel Coding Enabled Distributed Image Transmission with Imperfect Channel State Information
	}
		\author{
        Biao Dong,
	Bin Cao, \IEEEmembership{Member, IEEE},\\
        Guan Gui, \IEEEmembership{Fellow, IEEE},
        and Qinyu Zhang, \IEEEmembership{Senior Member, IEEE}
		\thanks{}
		\IEEEcompsocitemizethanks{This work was supported in part by the Shenzhen Science and Technology Program under Grant KJZD20240903095402004.
                Parts of this paper were presented at the IEEE Global Communications Conference (GLOBECOM), Cape Town, South Africa, Dec. 2024 \cite{Dong2024GC}. The open source code of this work is available at: {\href{https://dongbiao26.github.io/rdjscc/}{https://dongbiao26.github.io/rdjscc/}}.\IEEEcompsocthanksitem 
			\IEEEcompsocthanksitem Biao Dong, Bin Cao and Qinyu Zhang are with the School of Electronic and Information Engineering, Harbin Institute of Technology (Shenzhen), Shenzhen 518055, China (e-mail: 23b952012@stu.hit.edu.cn; caobin@hit.edu.cn; zqy@hit.edu.cn).
			\IEEEcompsocthanksitem Guan Gui is with the College of Telecommunications and Information Engineering, Nanjing University of Posts and Telecommunications, Nanjing 210003, China (e-mail: guiguan@njupt.edu.cn).
	}}

	\markboth{IEEE Transactions on Wireless Communications,~Vol.~XX, No.~XX, Month Year}{}
	\maketitle
\begin{abstract}
This work is concerned with robust distributed multi-view image transmission over a severe fading channel with imperfect channel state information (CSI), wherein the sources are slightly correlated. In contrast to point-to-point deep joint source-channel coding (DJSCC), the distributed setting introduces the key challenge of exploiting inter-source correlations without direct communication, especially under imperfect CSI. 
To tackle this problem, we leverage the complementarity and consistency characteristics among the distributed, yet correlated sources, and propose an robust distributed DJSCC, namely RDJSCC. In RDJSCC, we design a novel cross-view information extraction (CVIE) mechanism to capture more nuanced cross-view patterns and dependencies. In addition, a complementarity-consistency fusion (CCF) mechanism is utilized to fuse the complementarity and consistency from multi-view information in a symmetric and compact manner. Theoretical analysis and simulation results show that our proposed RDJSCC can effectively leverage the advantages of correlated sources even under severe fading conditions, leading to an improved reconstruction performance. 

\end{abstract}

\begin{IEEEkeywords} Distributed deep joint source-channel coding, distributed source coding, cross-view information extraction, complementarity-consistency fusion.

\end{IEEEkeywords}

\IEEEpeerreviewmaketitle

\section{Introduction}
\subsection{Background}
%
Various wireless applications, such as autonomous driving, remote healthcare, virtual reality, etc,  increase the pressure on wireless sensor networks. Given the limitations of computing resources, how to efficiently utilize the correlation between distributed sources at the decoder has attracted more and more attention. It is believed that distributed source coding (DSC) is a promising approach to achieving high quality multimodal communications, since DSC leverages the correlation of distributed sources and enables low-complexity encoding by shifting a significant amount of computation to the decoder \cite{Gündüz2022JSAC,Wolf1973TIT,WynerZiv1976,Heegard1985}. 

Traditional DSC primarily focuses on the information-theoretic perspective, such as Slepian-Wolf coding \cite{Wolf1973TIT}, Wyner-Ziv coding \cite{WynerZiv1976}, Berger-Tung coding \cite{Heegard1985}, etc. Although the theoretical framework of DSC paves a solid foundation to handle the compression problems of distributed sources, practical DSC systems have not been widely used due to the challenge in capturing complex correlation among sources in severe fading environments. 


\begin{figure}[htbp]
	\centering
	\includegraphics[height=5 cm, width=8cm] {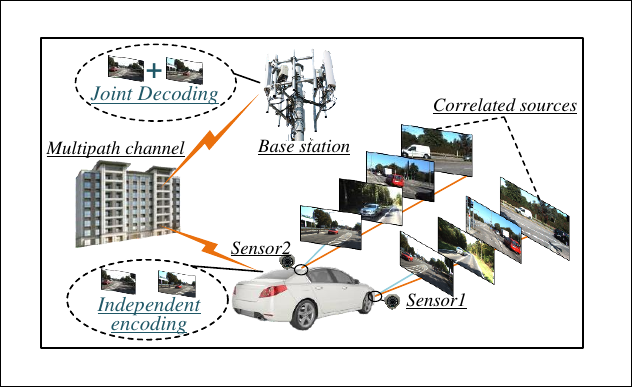}
	\caption{An autonomous driving example for illustrating DSC. Two sensors capture different views of the same obstacle, and these two views form correlated sources. The two views are independently encoded, transmitted through noisy channels, and finally jointly decoded at the base station.}
	\label{intro4}
\end{figure} 

\subsection{Related Works}
Recently, deep learning (DL) has made great success in many fields, such as channel estimation \cite{Hu2021}, intelligent signal processing \cite{Dong2022}, and multi-access communication \cite{Liu2022}. The popularity of DL can be summarized as follows: unlike human-crafted models, DL leverages deep neural networks (DNNs) to extract complex features automatically. Secondly, complex optimization problems can be tackled in an end-to-end learning way based on various DNN layers. Hence, more and more works attempted to apply DL to DSC for improving compression performance. 
The pioneering DL-based DSC framework was proposed in \cite{Liu2019ICCV}, which utilized the mutual information (MI) between distributed images for efficient compression. \cite{Mital2022DCC} further improved this compression framework, by extracting common information rather than feeding correlated sources to the decoder directly. 
On this basis, \cite{Li2024NIPS} extended the existing DSC to task-aware scenarios, which can allocate bandwidth based on task attributes and thus achieve an elegant rate-distortion trade-off. However, most of these works focused on lossy source compression, while neglecting the impact of wireless channel imperfections. Fig. \ref{intro4} presents an autonomous driving scenario to illustrate DSC under a noisy channel. In this regard, how to design efficient DSC to guarantee reliable transmission of correlated sources is a concerning issue. To this end, a DL-based DSC scheme which can ensure reliable transmission in noisy channels is needed. 

An effective approach to exploring the aforementioned problem is to utilize DL-based joint source and channel coding (DJSCC) \cite{Bourtsoulatze2019TCCN}. DJSCC leverages DL model to directly map sources to channel inputs. As shown in Fig. \ref{intro5}(a), source $\mathbf{s}_1$ is mapped as channel inputs $\mathbf{x}_1$ by a DNN. Recent research on DJSCC indicates that DJSCC can dynamically allocate bandwidth to source or channel coding and thus present a graceful performance in fading environments \cite{Bourtsoulatze2019TCCN,Zhang2023TWC, Xu2022TCSVT,Dai2022JSAC,Wu2022WCL,Yang2022TCCN, Shao2023WCL}. The first DJSCC method for wireless image transmission was proposed in \cite{Bourtsoulatze2019TCCN}, which proved the superiority of DJSCC in low signal-to-noise ratio (SNR) environments. 
To further improve the rate-distortion (RD) performance, \cite{Dai2022JSAC} introduced a hyperprior as side information and integrated Swin Transformer as the backbone. Combining the advantages of digital modulation, \cite{Yang2022TCCN,Wu2022WCL, Shao2023WCL} attempted to deploy orthogonal frequency division multiplexing (OFDM) to DJSCC. To utilize a single DJSCC across different SNR conditions and compression ratios, adaptive strategies of SNRs or compression ratios were also extensively studied \cite{Zhang2023TWC, Xu2022TCSVT,Dai2022JSAC}. 

In this paper, we devote to extending the DJSCC to distributed communication among multiple correlated sources over fading channels with imperfect channel state information (CSI). Unlike point-to-point DJSCC, which deals with a single source and transmission path, our setting aligns with DSC, where multiple spatially separated but statistically correlated sources independently encode their data without inter-source communication. The encoded signals are then transmitted over noisy channels and jointly decoded at a central decoder, which is referred to as distributed DJSCC.  
From an information-theoretic viewpoint, this constitutes a class of network capacity problem, rather than the single-user capacity scenario in point-to-point DJSCC. In such settings, effectively leveraging inter-source correlation becomes critical to approaching theoretical performance limits, as discussed in \cite{Gündüz2022JSAC, Wolf1973TIT, WynerZiv1976, Heegard1985}. We notice that two existing works have partially studied this issue \cite{Yilmaz2023arx, Wang2022ICASSP}. The authors in \cite{Yilmaz2023arx}, as shown in Fig. \ref{intro5}(a), proposed a low-latency image transmission method when one of correlated sources is losslessly accessed at the receiver. A novel neural network architecture incorporating the lossless correlated information at multiple stages was designed at the decoder. Nevertheless, in practical communication systems, correlated sources are often not losslessly accessed, as shown in Fig. \ref{intro5}(b). The channel quality of each source may vary significantly over time. Mismatched or low-correlated sources can lead to a degradation of transmission performance. In \cite{Wang2022ICASSP}, the authors considered the lossy access situation, and utilized a cross attention mechanism (CAM)-based DJSCC to capture the complex correlation among distributed sources. Simulation results demonstrated that CAM can achieve an improvement of reconstruction quality. However, CAM-based DJSCC focused on the additive white Gaussian noise (AWGN) channel and Rayleigh channel with perfect CSI, neglecting the impacts of severe fading with imperfect CSI on the correlation of sources. For example, in Fig. \ref{intro4}, the rapid movement of vehicles leads to serve channel fading, and thus perfect CSI acquisition is challenging. In this case, the correlation of sources undergoes drastic fluctuations. In addition, CAM-based DJSCC does not consider the trade-off between complementarity and consistency of lossy correlated sources.



\subsection{Contributions}
Motivated by the aforementioned perspectives, a robust deep joint source-channel coding (RDJSCC) enabled distributed image transmission scheme is proposed in this work. We focus on a more realistic (or general) scenario where correlated sources using OFDM modulation are lossy access over multi-path channel without perfect CSI. Our goal is to maximize the advantages of distributed source coding in noisy environments. Specifically, the contributions of this paper can be summarized as follows.
\begin{itemize}	
	\item  Guided by theoretic analysis, we design a flexible multi-view transmission framework to meet the requirements of consistency and complementarity. Specifically, a new RDJSCC enabled distributed image transmission scheme is proposed. Unlike CAM-based DJSCC, RDJSCC explores the trade-off between complementarity and consistency. To the best of our knowledge, this is the first work exploiting the complementarity and consistency to maximize the advantages of distributed DJSCC without perfect CSI.
    \item We respectively develop a novel cross-view information extraction (CVIE) mechanism and complementarity-consistency fusion (CCF) mechanism in RDJSCC to upgrade distributed DJSCC. Specifically, CVIE can learn cross-view information efficiently based on the shift mechanism. CCF could fuse the complementarity and consistency from multi-view information in a symmetric and compact manner based on a dynamic
    weight assignment policy.   
    \item Theoretical analysis and numerical experiments are conducted. Compared with CAM-based DJSCC, RDJSCC has better performance in terms of various indicators such as peak signal-to-noise ratio (PSNR), multi-scale structural
similarity index (MS-SSIM), and learned perceptual image patch similarity (LPIPS). We also verify the trade-off between peak-to-average power ratio (PAPR) and transmission performance in terms of PSNR. 
\end{itemize}	

\subsection{Organization}
The rest of this paper is arranged as follows. In Section \ref{sec:Problem}, we briefly introduce the considered system model. Section \ref{sec:Analysis} gives the theoretic analysis of RDJSCC, including the consistency and complementarity analysis at the reconstruction level. Guided by the theoretic analysis, novel CVIE and CCF are proposed in Section \ref{sec:Decoding}. Numerical results and discussions are given in Section \ref{sec:Experiments}, followed by conclusions in Section \ref{sec:Conclusion}. The summary of major notations is shown in Table \ref{notations}.

\begin{table}[htbp]
	\caption{Summary of Major Notations.}
	\centering
	\small
	\begin{tabular}{|c|l|l}
		\cline{1-2}
		\textbf{Notation}               & \textbf{Definition}    &  \\ \cline{1-2}
		$\mathbf{s}_{1},\mathbf{s}_{2}$ & Two correlated sources from two views &  \\ 
		$\mathbf{x}_{1},\mathbf{x}_{2}$ & Compressed representations of $\mathbf{s}_{1},\mathbf{s}_{2}$ &  \\ 
		${(\mathbf{z}_{1}, \mathbf{z}_{2})}$  & Corrupted version of $(\mathbf{x}_{1}, \mathbf{x}_{2})$ by channel &  \\ 
		$f(\cdot;\boldsymbol \phi)$&  DL-based encoder function parameterized with $\boldsymbol \phi$ &  \\ 
		$g(\cdot;\boldsymbol \theta)$&  DL-based decoder function parameterized with $\boldsymbol \theta$ &  \\ 
		$h$&   Impulse response of the multipath channel &  \\ 
		$\mathcal{K}$&    Dynamic weight &  \\ 		
		$d(\cdot)$&  Mean square error (MSE) &  \\ 
		$\rho$&  Clipping ratio  &  \\
		$R$&  Compression ratio  &  \\ \cline{1-2}
	\end{tabular}
 \label{notations}
\end{table}

\section{Problem Formulation} \label{sec:Problem} 
First, we introduce the system model. Then, we extend distributed DJSCC to OFDM-based communication systems.
\subsection{System Model}
We consider the following distributed images uplink transmission with a sensor $\mathbf{s}_{1}\in\mathbb{R}^M$ and its correlated version $\mathbf{s}_{2}\in\mathbb{R}^M$, with a joint distribution $p(\mathbf{s}_{1},\mathbf{s}_{2})$ capturing two views of the same object. Two sensors independently transmit their compressed representations $(\mathbf{x}_{1}, \mathbf{x}_{2})$ to a central decoder over a fading channel for joint decoding, as shown in Fig. \ref{intro5}(b). 
The compressed pair $(\mathbf{x}_{1}, \mathbf{x}_{2})$ is encoded from the originally correlated sources $(\mathbf{s}_{1}, \mathbf{s}_{2})$. We define $(\mathbf{z}_{1}, \mathbf{z}_{2}$ as the corrupted version of $(\mathbf{x}_{1}, \mathbf{x}_{2})$ by channel. The Wyner-Ziv theorem points out that independent encoding and joint decoding of correlated sources can theoretically achieve the same compression ratio as a joint encoding-decoding scheme under lossy compression \cite{WynerZiv1976}.  

\begin{figure}[htbp]
	\centering
	\includegraphics[height=5 cm, width=7cm] {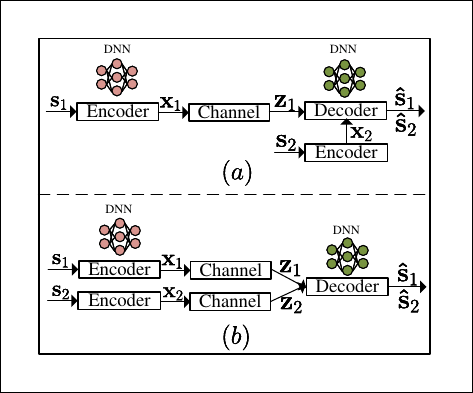}
	\caption{(a) The system model used in \cite{Yilmaz2023arx} where one of correlated sources is losslessly accessed. (b) Our considered system model where both correlated sources are lossy accessed.}
	\label{intro5}
\end{figure} 

\begin{remark}
	\label{remark1}
    Unlike the setting in \cite{Yilmaz2023arx}, where the decoder can losslessly access one source, as shown in Fig. \ref{intro5}(a), we consider the decoder access both correlated sources undergoing channel fading, as shown in Fig. \ref{intro5}(b), which reflects real-world scenarios.
\end{remark}

The correlated sources adopt DL-based encoder-decoder pair for codec training. Let $f(\cdot;\boldsymbol \phi)$ denote DL-based encoder function parameterized as ${\boldsymbol{\phi}}$. We further define $\mathbf{x}=f(\mathbf{s};\boldsymbol \phi)$, $\mathbf{x} \in \mathbb{C}^{M^{'}}$. The compression ratio can be obtained as $R \triangleq {M^{'}}/M$. The DNN-based codec architecture is given in Appendix \ref{sec:Architecture}. Over a block fading channel, the received signal can be expressed as

\begin{equation}
	\label{channel}
	{\mathbf{z}=\mathbf{h}*\mathbf{x}+\mathbf{w},}
\end{equation}
where $*$ denotes a linear convolution implemented with zero-padding to ensure the output $\mathbf{z} \in \mathbb{C}^{M'}$ has the same length as the input $\mathbf{x} \in \mathbb{C}^{M'}$; $\mathbf{w} \sim \mathcal{CN}(0, \sigma^2 \mathbf{I}_{M^{'}\times{M^{'}}})$ denotes AWGN; $\mathbf{h} \in \mathbb{C}^{L}$ is the impulse response vector of an $L$-path multipath channel; Each path's channel coefficient $h_l$ follows a complex Gaussian distribution with zero mean and variance $\sigma_l^2$, i.e., $h_l\sim\mathcal{CN}(0,\sigma_l^2)$ for $l=0,1,\ldots,L-1$. The variance $ \sigma_l^2$ follows the exponential decay, i.e.,  $\sigma_l^2 = \alpha_l e^{-\frac{l}{\gamma}}$, where $\gamma$ represents the delay and $\alpha_l$ is a normalization coefficient. The sum of variances equals to 1, i.e., $\sum_{l=0}^{L-1} \sigma_l^2 = 1$. 
The decoder at the base station recover the transmitted images as $\hat{\mathbf{s}}_{1}, \hat{\mathbf{s}}_{2}=g({\mathbf{z}}_{1},\mathbf{z}_{2};\boldsymbol{\theta})$, where $g(\cdot;{\boldsymbol{\theta}})$ is the decoder function parameterized as ${\boldsymbol{\theta}}$, and $\hat{\mathbf{s}}_{1}, \hat{\mathbf{s}}_{2}\in\mathbb{R}^N$ are the recovered images.

\subsection{OFDM-based DJSCC} \label{sec:Modulation}
Next, we describe how to introduce OFDM to distributed DJSCC.
The detailed process is summarized in Algorithm~\ref{algorithm1}. Each encoded representation $\mathbf{x}$ is power normalized and then allocated with an OFDM packet. Each packet contains $N_s$ information symbols and $N_p$ pilot symbols. The pilot symbols $\mathbf{x}_p\in\mathbb{C}^{N_p \times N_c}$ are known to both the transmitter and receiver. Under the OFDM modulation setting, $\mathbf{x}$ represents the frequency domain symbol. First, $\mathbf{x}$ selects $N_c$ subcarriers, i.e.,  $\mathbf{x}$ is reshaped as $\mathbf{x}\in\mathbb{C}^{N_s \times N_c}$. Then, symbols on all subcarriers can be transformed into a single time-domain symbol $ \mathscr{X}\in\mathbb{C}^{N_s \times N_c} $ through inverse discrete Fourier transform (IDFT) denoted as, $ \mathscr{X}=\boldsymbol{F}_{N_c}^H \mathbf{x}$, where $\boldsymbol{F}_{N_c}$ is an ${N_c}$-dimensional DFT matrix and $\boldsymbol{F}_{N_c}^H$ is an IDFT matrix. Next, a cyclic prefix (CP) of length $L_{\mathrm {cp}}$ is added on the time-domain symbol $\mathscr{X}$ yielding $\mathscr{X}_{\mathrm {cp}}\in\mathbb{C}^{N_s \times (N_c+L_{\mathrm {cp}})}$. The pilot symbols $\mathbf{x}_p$ are also transformed using IDFT and appended with a CP, then concatenated with $\mathscr{X}_{\mathrm{cp}}$ to form the OFDM symbol $\mathscr{X}_{\mathrm{ofdm}} \in \mathbb{C}^{(N_s + N_p) \times (N_c + L_{\mathrm{cp}})}$.

After adding pilots, $\mathscr{X}_{\mathrm {ofdm}}$ is transmitted through the fading channel as Eq. (\ref{channel}). When the base station receives $\mathscr{\hat X}_{\mathrm {ofdm}}$ (the corrupted version of $\mathscr{X}_{\mathrm {ofdm}}$), the OFDM demodulation is performed, which involves removing the CP and obtaining the corrupted frequency-domain symbols $\mathbf{z}$ using DFT, as well as the corrupted pilot symbols $\mathbf{z}_p$.

\begin{algorithm}
	\caption{Signal Processing Procedure of OFDM-based distributed DJSCC.}
	\label{algorithm1}
	\LinesNumbered 
	\KwIn{Training data $(\mathbf{s}_{1},\mathbf{s}_{2})$; The pilot symbols $\mathbf{x}_p\in\mathbb{C}^{N_p \times N_c}$;}
	\KwOut{The recovered images $\hat{\mathbf{s}}_{1}, \hat{\mathbf{s}}_{2}$}
	\For{$(\mathbf{s}_{1},\mathbf{s}_{2}) \in \mathcal{D}_{\mathrm{train}}$}
	{		
        ${SNR}_1$, ${SNR}_2$ $\leftarrow$ Randomly generate uniform SNR of independent channels;\leavevmode \\
        \Comment{$\triangleright$ Encoding} \leavevmode \\
		$\mathbf{x}_{1} \in \mathbb{C}^{N_s \times N_c}\leftarrow f(\mathbf{s}_{1}, {SNR}_1;\boldsymbol \phi)$;\leavevmode \\
      $\mathbf{x}_{2} \in \mathbb{C}^{N_s \times N_c}\leftarrow f(\mathbf{s}_{2}, {SNR}_2;\boldsymbol \phi)$;
		
		\Comment{$\triangleright$ Modulation} \leavevmode \\
		$\mathscr{X}\in \mathbb{C}^{N_s \times N_c} \leftarrow \boldsymbol{F}_{N_c}(\mathbf{x});$
		
		$\mathscr{X}_{\mathrm {cp}}\in \mathbb{C}^{(N_s+L_{cp}) \times N_c} \leftarrow \mathrm{CP}(\mathscr{X}, CP);$
		
		$\mathscr{X}_{\mathrm {ofdm}}\in \mathbb{C}^{(N_s+L_{cp}) \times (N_c+N_p)} \leftarrow \mathrm{Pilot}(\mathscr{X}_{\mathrm {cp}}, \mathbf{x}_p);$
		
		$\mathscr{X}_{\mathrm {ofdm}}^{clip} \leftarrow \mathrm{Clip}(\mathscr{X}_{\mathrm {ofdm}}, \mathbf{x}_p);$
		
		\Comment{$\triangleright$Through the fading channel} \leavevmode \\
		$\mathscr{\hat X}_{\mathrm {ofdm}}^{clip}\leftarrow \mathbf{h}*\mathscr{X}_{\mathrm {ofdm}}^{clip}+\mathbf{w};$\leavevmode \\				
		$\mathbf{z}\leftarrow \mathscr{\hat X}_{\mathrm {ofdm}}^{clip};$ \Comment{$\triangleright$Demodulation}\leavevmode \\

        \Comment{$\triangleright$ Dcoding}  \leavevmode \\
		$\hat{\mathbf{s}}=g({\mathbf{z}_{1}},{\mathbf{z}_{2}}, {SNR}_1, {SNR}_2;\boldsymbol{\theta});$	
	}	
\end{algorithm}

\begin{remark}
	\label{remark2}
    It should be pointed out that the proposed method belongs to discrete-time analog transmission (DTAT), where baseband complex symbols are transmitted directly after OFDM modulation, instead of passband transmission of digital signals \cite{Shao2023WCL}. This approach combines the advantages inherent to OFDM (i.e., overcoming frequency-selective fading) with DTAT. However, there exists a trade-off between PAPR and performance in OFDM-based DJSCC system. 
\end{remark}

The PAPR problem is important in the uplink since the efficiency of power amplifier is critical due to the limited battery power in a mobile terminal \cite{Yong2010book}. \cite{Shao2023WCL} proved that the high PAPR of DJSCC could be tackled by incorporating clipping into the training process as,
\begin{equation}
	\mathscr{X}_{\mathrm {ofdm}}^{clip}= \begin{cases}\mathscr{X}_{\mathrm {ofdm}}, & \text { if }\left|\mathscr{X}_{\mathrm {ofdm}}\right| \leq \rho \mathscr{\bar X}_{\mathrm {ofdm}} \\ \rho \mathscr{\bar X}_{\mathrm {ofdm}}, & \text { if }\left|\mathscr{X}_{\mathrm {ofdm}}\right|>\rho \mathscr{\bar X}_{\mathrm {ofdm}},\end{cases}
\end{equation}
where $\mathscr{X}_{\mathrm {ofdm}}^{clip}$ represents the clipped signal, $\rho$ represents the clipping ratio and $\mathscr{\bar X}_{\mathrm {ofdm}}$ represents the average amplitude of $\mathscr{X}_{\mathrm {ofdm}}$. The clipping ratio $\rho$ should be selected properly to avoid destroying the orthogonality among subcarriers. We incorporate clipping into the training process of RDJSCC to strike the balance between PAPR and performance.

\section{Preliminary Theoretic Analysis of the Pproposed RDJSCC} \label{sec:Analysis}
In this section, we first define reconstruction-relevant information for exploring how much information is sufficient for the reconstruction of the correlated images. Then, we build
a variational model for distributed image transmission and provide an analysis of imperfect CSI. Finally, we give an information-theoretic analysis regarding the impact of noise on the correlated sources, which guides us in efficiently utilizing MI to achieve collaborative recovery in the practical distributed wireless sensors network.

\subsection{Complementarity and Consistency Analysis} \label{sec:Analysis1}
We begin by analyzing the problem of distributed image transmission and ask how much information is sufficient for the reconstruction of multi-view images. 

\begin{figure}[htbp]
	\centering
	\includegraphics[height=2.5cm, width=8.5cm] {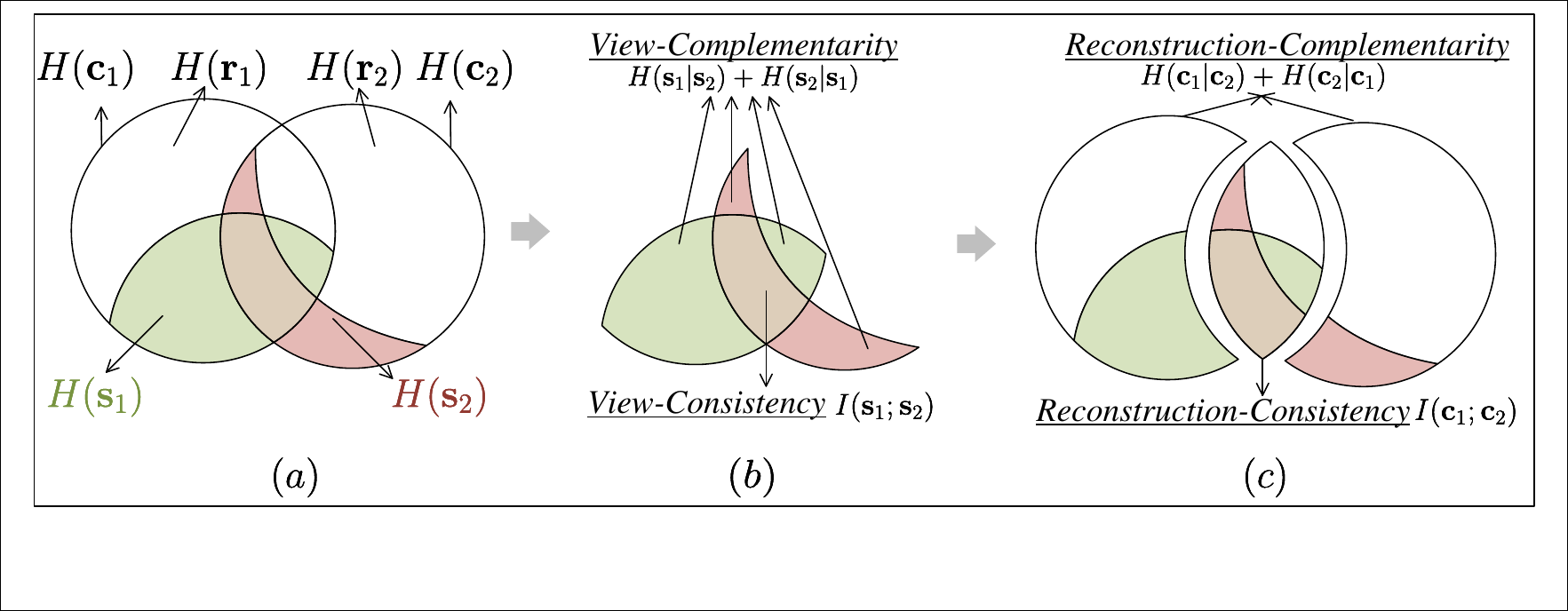}
	\caption{Venn diagram visualization of entropies and MI for six variables: $\mathbf{s}_1$, $\mathbf{s}_2$, $\mathbf{c}_1$, $\mathbf{c}_2$, $\mathbf{r}_1$ and $\mathbf{r}_2$ respectively, where $\mathbf{c}_1$, $\mathbf{c}_2$, $\mathbf{r}_1$ and $\mathbf{r}_2$ are defined by Definition \ref{Reconstruction-relevant information}.}
	\label{Analysis1_fig}
\end{figure}

\begin{definition}[Reconstruction-relevant information]
	\label{Reconstruction-relevant information}
	We formally define the reconstruction-relevant information as follows
	\begin{equation}
		H(\mathbf{c}) = H(\mathbf{s}) + H(\mathbf{r}),
	\end{equation}
	where $\mathbf{c} \sim p(\mathbf{c})$ is a reconstruction-relevant random variable. $H(\mathbf{c})$ is the entropy of $\mathbf{c}$, which 
	contains two parts: sufficient statistics of the source $H(\mathbf{s})$ and optimal error-resilient information $H(\mathbf{r})$ \footnote{Here, $H(\mathbf{s})$ acts as a proxy to measure the semantic content that ideally should be preserved. It provides a conceptual explanation for how the semantic information is distributed across different views, and how the reconstruction performance can benefit from both complementary and consistent information. }.
\end{definition}

$H(\mathbf{c})$ is an ideal lossless transmission limit, as illustrated in Fig. \ref{Analysis1_fig}. $I(\mathbf{s}_1;\mathbf{s}_2)$ and $(H(\mathbf{s}_1|\mathbf{s}_2)+H(\mathbf{s}_2|\mathbf{s}_1))$ are view-consistency and view-complementarity information, respectively, derived from the intrinsic overlap and differences between two views \cite{Li2022TKD}, whereas $I(\mathbf{c}_1;\mathbf{c}_2)$ and $(H(\mathbf{c}_1|\mathbf{c}_2)+H(\mathbf{c}_2|\mathbf{c}_1))$ extend consistency and complementarity to the reconstruction level, additionally expressing the uncertainty brought by the channel. The best balance between complementarity and consistency leads to optimal reconstruction performance.

\subsection{Probabilistic Model Analysis}From the variational perspective, DJSCC can be deemed as a variational autoencoder (VAE) \cite{Dai2022JSAC}. Then, we build a variational model for distributed image transmission. 

Specifically, the distribution of latent representation $q_{\mathbf{z}|\mathbf{s}}(\mathbf{z}|\mathbf{s})$ is learned by a transform $f(\cdot;\boldsymbol \phi)$ at the encoder, resulting in $\mathbf{z}_{1}=f(\mathbf{s}_1;\boldsymbol \phi)$. Similarly, the latent representation of the other view can be expressed as
$\mathbf{z}_{2}=f(\mathbf{s}_{2};\boldsymbol \phi)$. Finally, the reconstructed image can be derived from a DL-based decoder as ${\mathbf{\hat s}_1},{\mathbf{\hat s}_2}=g({\mathbf{z}_{1}},\mathbf{z}_{2};\boldsymbol{\theta})$.
In this setting, we aim to approximate the joint distribution of the random variables as $p(\mathbf{s}_1,\mathbf{s}_2,\mathbf{z}_{1},\mathbf{z}_{2})$, 
which is intractable. To obtain a tractable solution, a factored variational approximation of the posterior distribution is introduced as $q(\mathbf{z}_{1},\mathbf{z}_{2}\mid\mathbf{s}_1,\mathbf{s}_{2})$.

\begin{equation}
	\label{kl}	
	\begin{aligned} 
		\small & \mathbb{E}_{\mathbf{s}_1, \mathbf{s}_{2} \sim p(\mathbf{s}_1, \mathbf{s}_{2})} D_{\mathrm{KL}}[q(\mathbf{z}_{1},\mathbf{z}_{2} \mid \mathbf{s}_1, \mathbf{s}_{2}) \| p(\mathbf{z}_{1},\mathbf{z}_{2} \mid \mathbf{s}_1, \mathbf{s}_{2})] \\
		&=\mathbb{E}_{\mathbf{s}_1, \mathbf{s}_{2} \sim p(\mathbf{s}_1, \mathbf{s}_{2})} \mathbb{E}_{\mathbf{z}_{1},\mathbf{z}_{2} \sim q}\Big(\log q(\mathbf{z}_{1} \mid \mathbf{s}_1)q(\mathbf{z}_{2} \mid \mathbf{s}_{2})\\
		&-\big(\underbrace{\log p(\mathbf{s}_1 \mid \mathbf{z}_{2}, \mathbf{z}_{1})}_{D_1}-\big(\underbrace{\log p(\mathbf{s}_2 \mid \mathbf{z}_{2}, \mathbf{z}_{1})}_{D_2} +\underbrace{\log p\left(\mathbf{z}_{1}\right)}_{R_1}\\
		&+\underbrace{\log p\left(\mathbf{z}_{2}\right)}_{R_2}\big)\Big)+ \text {const.}
	\end{aligned}
\end{equation}

We minimize the Kullback-Leibler (KL) divergence between the approximate density $q(\mathbf{z}_{1},\mathbf{z}_{2} \mid \mathbf{s}_1, \mathbf{s}_{2})$ 
and the true posterior $p(\mathbf{z}_{1},\mathbf{z}_{2} \mid \mathbf{s}_1, \mathbf{s}_{2})$ as Eq. (\ref{kl}). 
The first term in the KL divergence can be technically dropped \cite{Dai2022JSAC}. The terms $D_1$, $D_2$ denote the reconstruction distortion. $R_1$ and $R_2$ denote the compression ratio of $\mathbf{z}_{1}$ and $\mathbf{z}_{2}$. 

When considering a fixed compression ratio transmission, where $R_1$ and $R_2$ are constant,  minimizing the above KL divergence is equivalent with the following optimization problem,
\begin{subequations}\label{eq:obj}
\begin{align}
&\min_{\boldsymbol{\phi},\boldsymbol{\theta}}  && \mathbb{E}_{\mathbf{s}_1}\left[d\left(\mathbf{s}_1, \mathbf{\hat{s}_1}\right)\right] + \mathbb{E}_{\mathbf{s}_2}\left[d\left(\mathbf{s}_2, \mathbf{\hat{s}_2}\right)\right]    \tag{\ref{eq:obj}} \\
&\;\text{s.t.} && R_1 = B_1, \label{eq:const1}  \\
&             && R_2 = B_2, \label{eq:const2} \\
&             && P_1 = P_{\mathrm{total}_1}, \label{eq:const3} \\
&             && P_2 = P_{\mathrm{total}_2}, \label{eq:const4}
\end{align}
\end{subequations}
where $d(\cdot)$ denotes the mean square error (MSE). The objective function Eq. (\ref{eq:obj}) is the sum of reconstruction MSE from two views. $\boldsymbol{\theta}$ and $\boldsymbol{\phi}$ can be optimized using DL methods based on gradient descent with respective constraints, i.e., bandwidth constraints Eq. (\ref{eq:const1}) (\ref{eq:const2}) and power constraints Eq. (\ref{eq:const3}) (\ref{eq:const4}) for each view. 

In DJSCC, a squared error loss is equivalent to assuming a Gaussian likelihood for the reconstruction \cite{Balle2018arXiv}, i.e., 
\begin{equation}
    p_{\mathbf{s}\mid\mathbf{z}}(\mathbf{s}\mid\mathbf{z},\boldsymbol{\theta})=\quad\mathcal{N}\big(\mathbf{s}\mid\mathbf{\hat{s}},\mathbf{\Sigma}\big)~~\mathrm{with~}\mathbf{\hat{s}}=g(\mathbf{z};\boldsymbol{\theta}),
\end{equation}  
where $\mathbf{\Sigma}$ is the covariance matrix of Gaussian distribution. Here, the Gaussian assumption is imposed on the image domain $\mathbf{s}$ for analytical convenience, rather than the encoded latent domain $\mathbf{z}$. Therefore, from a training perspective, DJSCC is essentially an optimization processes based on a maximum likelihood estimation (MLE) under a fixed compression ratio \cite{Bourtsoulatze2019TCCN,Zhang2023TWC,Xu2022TCSVT,Wu2022WCL,Yang2022TCCN,Shao2023WCL}. However, from an analytical perspective, distributed DJSCC naturally aligns with a Bayesian interpretation, where $p(\mathbf{x}_1)$ serves as the prior for the transmission of $\mathbf{x}_1$, and $p(\mathbf{x}_1|\mathbf{z}_2)$ represents the posterior refined through cross-view observation \footnote{Observing one view (e.g., $\mathbf{z}_2$) helps reduce the uncertainty about the other (e.g., $\mathbf{x}_1$), and vice versa.}. Due to view disparity, channel distortion, and imperfect CSI, such observations are often noisy, and thus an adaptive compensation is needed. Next, we analyse how CSI estimation error affects performance within a Bayesian framework.

 
\begin{figure*}[!tb]
\hrule
\begin{align}
&{\mu_{\mathcal{X}_1 \mid \mathcal{Z}_1, \mathcal{Z}_2} = \mu_{x_1} + \frac{ \mathcal{H}_1 \sigma_{x_1}^2 (\mathcal{H}_2^2 \sigma_{x_2}^2 (1 - r^2) + \sigma_{\tilde w}^2) ({\mathcal{Z}_1 - \mathcal{H}_1 \mu_{x_1}}) + \mathcal{H}_2 r \sigma_{x_1} \sigma_{x_2} \sigma_{\tilde w}^2 ({\mathcal{Z}_2 - \mathcal{H}_2 \mu_{x_2} })}{ \mathcal{H}_1^2 \mathcal{H}_2^2 \sigma_{x_1}^2 \sigma_{x_2}^2 (1 - r^2) + \mathcal{H}_1^2 \sigma_{x_1}^2 \sigma_{\tilde w}^2 + \mathcal{H}_2^2 \sigma_{x_2}^2 \sigma_{\tilde w}^2 + \sigma_{\tilde w}^4}.}\label{post1}\\
&{\sigma_{\mathcal{X}_1 \mid \mathcal{Z}_1, \mathcal{Z}_2}^2 = \sigma_{x_1}^2 - \frac{ 
\mathcal{H}_1^2 \sigma_{x_1}^4 \mathcal{H}_2^2 \sigma_{x_2}^2 (1 - r^2) + \mathcal{H}_1^2 \sigma_{x_1}^4 \sigma_{\tilde w}^2 + \mathcal{H}_2^2 r^2 \sigma_{x_1}^2 \sigma_{x_2}^2 \sigma_{\tilde w}^2 
}{ 
\mathcal{H}_1^2 \mathcal{H}_2^2 \sigma_{x_1}^2 \sigma_{x_2}^2 (1 - r^2) + \mathcal{H}_1^2 \sigma_{x_1}^2 \sigma_{\tilde w}^2 + \mathcal{H}_2^2 \sigma_{x_2}^2 \sigma_{\tilde w}^2 + \sigma_{\tilde w}^4. 
}.}\label{post2}
\end{align}
\end{figure*}
\subsection{Imperfect CSI  Analysis}\label{Imperfect CSI}
Different from the perfect CSI setting in \cite{Wang2022ICASSP}, we analyse the impact of CSI estimation error under both data-driven and model-driven CSI estimation approaches \cite{Yang2022TCCN}. Data-driven approach relies on the DNN to learn the underlying information about CSI from the channel output pilots and signal. It treats signal processing as a black box. Model-driven approach relies on domain knowledge to guide the design of DNN. For example, we can use minimum mean square error (MMSE) or least square (LS) estimator to estimate CSI \footnote{It can be proved that model-driven approaches generally provide more accurate CSI estimation compared to data-driven methods, but this improvement partly comes at the cost of increased neural network complexity \cite{Yang2022TCCN}.}. \\

However, both data-driven and model-driven approaches inevitably introduce CSI estimation error. Considering a specific subcarrier, the frequency-domain CSI $\mathcal H$ and the associated CSI estimation error can be modeled as
\begin{equation}
        {{\mathcal H} = \hat{\mathcal H} + \mathcal{E}},\label{CSI_error}
\end{equation}
where $\hat{\mathcal{H}}$ denotes the estimated CSI, and $\mathcal{E}\sim \mathcal{N}(\mathcal{E};0, \sigma_{e}^2)$ represents the corresponding CSI estimation error. Next, we investigate how CSI estimation error affects the posterior estimation and correlation within a Bayesian framework.

In a Bayesian framework, we treat the transmitted signal $\mathbf{x}$ as a random vector sampled from a statistical distribution $p(\mathbf{x})$. The goal is to estimate $\mathbf{x}$ by maximizing the posterior probability given the received observation $\mathbf{z}$. This yields the MAP estimator
\begin{equation}
    {\hat{\mathbf{x}}^{\text{MAP}} = \arg\max_{\mathbf{x}} p(\mathbf{x}|\mathbf{z})= \arg\max_{\mathbf{x}}p({\mathbf{z}}|\mathbf{x})p({\mathbf{x}}),}\label{map1}
\end{equation}
where the likelihood function $p({\mathbf{z}}|\mathbf{x})$ is determined by channel model. Exact characterization of $p(\mathbf{x})$ is non-trivial. Hence, we assume that the elements of $\mathbf{x}$ are drawn from a generic Gaussian random variable in an i.i.d. way  \cite[Remark 2]{Shao2025}. Formally, we define the Gaussian random variable $\mathcal{X} \sim \mathcal{N}(\mathcal{X};\mu_{x}, \sigma_{x}^2)$, where $\mu_{x}$ and $\sigma_{x}^2$ denote the sample mean and sample variance computed from the observed realizations of $\mathbf{x}$  \footnote{We emphasize that there is no explicit prior distribution on the latent representations $\mathbf{x}$ in practice. Here, Gaussian modeling of the latent representations is adopted solely for analytical convenience within a Bayesian framework}. Further, the received observation $\mathbf{z}$ is also drawn i.i.d., and can be viewed as realization of a random variable
\begin{equation}
    {\mathcal{Z} = \mathcal{H} \mathcal{X} + \mathcal{W}},\label{ch1}
\end{equation}
where $\mathcal{W} \sim \mathcal{N}(\mathcal{W};0, \sigma_{w}^2)$ denotes AWGN. Correspondingly, Eq. \eqref{map1} can be written as
\begin{equation}
    {\hat{\mathcal{X}}^{\text{MAP}} = \arg\max_{\mathcal{X}} p(\mathcal{X}|\mathcal{Z})= \arg\max_{\mathcal{X}}p({\mathcal{Z}}|\mathcal{X})p({\mathcal{X}}).}\label{map2}
\end{equation}
For the sake of further analysis, we make the following assumptions. 
\begin{assumption}
	\label{assumption2}
	In distributed DJSCC, there exist two transmitted signals from two views, i.e., $\mathcal{X}_1 \sim \mathcal{N}(\mathcal{X}_1;\mu_{x_1}, \sigma_{x_1}^2)$, $\mathcal{X}_2 \sim \mathcal{N}(\mathcal{X}_2;\mu_{x_2}, \sigma_{x_2}^2)$. Given the statistical dependency between the two views, it is natural to assume that $\mathcal{X}_1$ and $\mathcal{X}_2$ follow a joint Gaussian distribution, i.e., $\mathcal{X}_1, \mathcal{X}_2 \sim \mathcal{N}(\mu_{x_1}, \mu_{x_2}, \sigma_{x_1}^2, \sigma_{x_2}^2, r)$, where $r$ denotes the correlation coefficient.
\end{assumption}

 Then, Eq. \eqref{map2} can be written as \footnote{Due to the symmetry between the two views, without loss of generality, we focus on one view for detailed analysis, i.e., $\mathcal{X}_1$.}
\begin{equation}
\begin{aligned}
        {\hat{\mathcal{X}_1}^{\text{MAP}}} &{= \arg\max_{\mathcal{X}_1} p(\mathcal{X}_1|\mathcal{Z}_1,\mathcal{Z}_2)}\\     &{=\arg\max_{\mathcal{X}_1}p({\mathcal{Z}_1,\mathcal{Z}_2}|\mathcal{X}_1)p({\mathcal{X}_1})},\label{map3}
\end{aligned}
\end{equation}
where the likelihood function $p({\mathcal{Z}_1,\mathcal{Z}_2}|\mathcal{X}_1)$ is determined by Eq. \eqref{CSI_error} and Eq. \eqref{ch1}. By substituting Eq. \eqref{CSI_error} into Eq. \eqref{ch1}, we have
\begin{equation}\label{received1}
{{\mathcal{Z}} = (\hat{\mathcal H} + \mathcal{E}) \mathcal{X} + \mathcal{W}=\hat{\mathcal H}\mathcal{X} + \underbrace{\mathcal{E}\mathcal{X}  + \mathcal{W}}_{\text{Equivalent noise} \ \widetilde{\mathcal{W}}}.} 
\end{equation} 
It can be observed that CSI estimation error  introduces additional uncertainty into the likelihood function. The equivalent noise term is defined as
\begin{equation}\label{equivalent_noise}
    {\widetilde{\mathcal{W}}\sim \mathcal{N}(\widetilde{\mathcal{W}};0, \sigma_{\tilde w}^2)~~\mathrm{with~}\sigma_{\tilde w}^2 = \sigma_{e}^2\sigma_{x}^2+ \sigma_{w}^2.}
\end{equation}
From an information-theoretic perspective, the CSI estimation error  would result in an SNR loss and thus degrade capacity \cite[Eq.(12)]{Yoo2006}.

{Next, we can derive the Bayesian estimation. According to Bayes’ theorem, we have }
\begin{equation}
    {p(\mathcal{X}_1|\mathcal{Z}_1,\mathcal{Z}_2) \propto p({\mathcal{Z}_1,\mathcal{Z}_2}|\mathcal{X}_1)p({\mathcal{X}_1})}
\end{equation}
{where the multiplication of two Gaussians is still a Gaussian, thus }
\begin{equation}
{ p(\mathcal{X}_1|\mathcal{Z}_1,\mathcal{Z}_2) \sim \mathcal{N}(\mathcal{X}_1;\mu_{\mathcal{X}_1 \mid \mathcal{Z}_1, \mathcal{Z}_2}, \sigma_{\mathcal{X}_1 \mid \mathcal{Z}_1,\mathcal{Z}_2}^2), }  
\end{equation}
{where $\mu_{\mathcal{X}_1 \mid \mathcal{Z}_1, \mathcal{Z}_2}$ and $\sigma_{\mathcal{X}_1 \mid \mathcal{Z}_1, \mathcal{Z}_2}^2$ are given in} Eq. \eqref{post1} and Eq. \eqref{post2}. {Let us take a closer look:}
\begin{enumerate}
\item  {\textbf{Cross-View Correction}: The posterior mean $\mu_{\mathcal{X}_1 \mid \mathcal{Z}_1, \mathcal{Z}_2}$ fuses information from both views ($\mathcal{Z}_1$ and $\mathcal{Z}_2$). The term $\mathcal{H}_2 r \sigma_{x_1} \sigma_{x_2} \sigma_{\tilde w}^2 (\mathcal{Z}_2 - \mathcal{H}_2 \mu_{x_2})$ explicitly uses $\mathcal{X}_2$ to correct the estimation of $\mathcal{X}_1$.}
\item {\textbf{Equivalent Noise Suppression}: The posterior variance $\sigma_{\mathcal{X}_1 \mid \mathcal{Z}_1, \mathcal{Z}_2}^2$ decreases as equivalent noise $\sigma_{\tilde w}^2$ decreases, which means that lower CSI estimation error  $\sigma_e^2$ or channel noise $\sigma_w^2$ directly reduces uncertainty. }
\item {\textbf{Extreme Cases}}: \textcircled{1}$r \to 0$, {it means no correlation between views \footnote{{In this case, view $\mathcal{X}_2$ provides no useful information for reconstructing view $\mathcal{X}_1$, and thus its transmission becomes unnecessary, i.e., $\mathcal{H}_2 = 0$.}}. The posterior variance degenerates to $
\sigma_{\mathcal{X}_1 \mid \mathcal{Z}_1, \mathcal{Z}_2}^2 = \frac{\sigma_{x_1}^2 \sigma_{\tilde w}^2}{\mathcal{H}_1^2 \sigma_{x_1}^2 + \sigma_{\tilde w}^2}$, 
which corresponds to the classical MMSE estimation} \cite[Chapter 10]{Kay1993}. \textcircled{2}$\sigma_{e} \to \infty$, {thus $\mu_{\mathcal{X}_1|\mathcal{Z}_1,\mathcal{Z}_2} \to \mu_{x_1}$ and $\sigma_{\mathcal{X}_1|\mathcal{Z}_1,\mathcal{Z}_2}^2 \to \sigma_{x_1}^2$. It means that the posterior naturally falls back to prior, avoiding overfitting to noisy observations when CSI is highly unreliable. }
\end{enumerate}

{Based on the above analysis, we could find that inter-view correlation has a direct impact on the posterior. Meanwhile, CSI estimation error also degrades the statistical correlation between the received signals at the decoder (CSI estimation error further degrades the equalization performance), which makes such correlation increasingly intractable. }
\begin{remark}
	\label{remark3}
{The correlation among views is difficult to compute, especially in continuous and high-dimensional encoded representations. This challenge is exacerbated by CSI estimation errors, which distort inter-view dependencies. To address this, we analyze the MI of multi-view encoded representations under imperfect CSI, providing guidance for optimizing consistency and complementarity at the reconstruction level.}

\end{remark}

\subsection{Mutual Information Analysis}\label{sec:Correlation}
In this section, we first analyze the MI changes of multi-view encoded representations during the semantic encoding process. Then, we analyze the influence of noisy channel.
\begin{figure}[htbp]
	\centering
	\includegraphics[height=2.5cm, width=8cm] {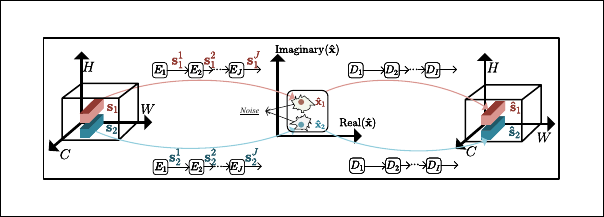}
	\caption{Geometrical interpretation of the mappings performed by the network in Fig. \ref{intro5}(b). Two view image sources $\mathbf{s}_{1}\in\mathbb{R}^{C \times H\times W}$, $\mathbf{s}_{2}\in\mathbb{R}^{C \times H\times W}$ are encoded by a semantic encoder, and recovered by a semantic decoder.  }
	\label{TWC_encoder_cor}
\end{figure}
\subsubsection{Mutual Information Analysis under Semantic Encoding} Analytically calculating the MI of multi-view encoded representations is non-trivial due to the high nonlinearity of the DNN. Thus, we resort to an information-theoretic analysis for a further insight.
The encoder for DJSCC based on convolutional backbone networks is composed of multiple stacked convolutional blocks or residual blocks, which can be expressed as
\begin{equation}
	\begin{aligned}
	f(\cdot; \boldsymbol{\phi}):=\bigcup\limits_{j=1}^JE_j= E_1\circ E_2\cdots\circ E_J,\\
	g(\cdot; \boldsymbol{\theta}):=\bigcup\limits_{i=1}^ID_i= D_1\circ D_2\cdots\circ D_I,
	\end{aligned}
\end{equation}
where $E_j$, $D_i$ respectively denote the $j$-$th$, $i$-$th$ stacked blocks of codecs respectively, and $\circ$ denotes the stacking of blocks. As shown in Fig. (\ref{TWC_encoder_cor}), $\mathbf{s}_1^j$ and $\mathbf{s}_2^j$ are the output of the view $\mathbf{s}_1$ and $\mathbf{s}_2$ at the $j$-$th$ block.
By the data processing inequality, we have 
\begin{equation}
	\begin{aligned}
		I({\mathbf{s}}_1; {\mathbf{s}}_1^a) \geq I({\mathbf{s}}_1; {\mathbf{s}}_1^b), 1 \leq a \leq b \leq J,\\
		I({\mathbf{s}}_2; {\mathbf{s}}_2^a) \geq I({\mathbf{s}}_2; {\mathbf{s}}_2^b), 1 \leq a \leq b \leq J,
	\end{aligned}
\end{equation}
{Meanwhile, the Markov chain $\mathbf{s}_1 \to \mathbf{s}_2 \to \mathbf{s}_2^j$ holds, since $\mathbf{s}_2^j$ is generated solely from $\mathbf{s}_2$ by the encoder. This implies that $\mathbf{s}_2^j$ is conditionally independent of $\mathbf{s}_1$ given $\mathbf{s}_2$. By symmetry, we also have $\mathbf{s}_2 \to \mathbf{s}_1 \to \mathbf{s}_1^j$.}

\begin{theorem}[Mutual Information Non-Increasing Theorem]
During the same encoding process across multiple stages, the MI between the two correlated views is non-increasing. Formally, for any $1 \leq a \leq b \leq J$, we have
\label{MI Reduction}
\begin{equation}\label{MI_reduction}
    I({\mathbf{s}}_1^a; {\mathbf{s}}_2^a) \geq I(\mathbf{s}_1^b; \mathbf{s}_2^b),
\end{equation}
\end{theorem}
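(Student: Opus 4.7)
The plan is straightforward: apply the data processing inequality (DPI) twice, exploiting the structural fact that each view is encoded independently of the other inside the encoder $f(\cdot;\boldsymbol{\phi})$. Concretely, because $\mathbf{s}_1^b = E_b \circ \cdots \circ E_{a+1}(\mathbf{s}_1^a)$ depends only on $\mathbf{s}_1^a$, and symmetrically $\mathbf{s}_2^b$ depends only on $\mathbf{s}_2^a$, conditioning on the intermediate representation of one view renders the deeper representation of that view independent of everything in the other view. This gives two Markov chains that together compress $I(\mathbf{s}_1^a;\mathbf{s}_2^a)$ down to $I(\mathbf{s}_1^b;\mathbf{s}_2^b)$.

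First I would establish the Markov chain $\mathbf{s}_2^a \to \mathbf{s}_1^a \to \mathbf{s}_1^b$. Since $\mathbf{s}_1^b$ is obtained by applying the deterministic blocks $E_{a+1},\dots,E_b$ to $\mathbf{s}_1^a$ alone (no input from view~2 enters these blocks), $\mathbf{s}_1^b$ is conditionally independent of $\mathbf{s}_2^a$ given $\mathbf{s}_1^a$. Applying DPI yields $I(\mathbf{s}_1^a;\mathbf{s}_2^a) \geq I(\mathbf{s}_1^b;\mathbf{s}_2^a)$. Next I would repeat the argument on the second view via the chain $\mathbf{s}_1^b \to \mathbf{s}_2^a \to \mathbf{s}_2^b$: this is valid because $\mathbf{s}_2^b$ is a deterministic function of $\mathbf{s}_2^a$ alone and is thus conditionally independent of $\mathbf{s}_1^b$ given $\mathbf{s}_2^a$. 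A second DPI invocation gives $I(\mathbf{s}_1^b;\mathbf{s}_2^a) \geq I(\mathbf{s}_1^b;\mathbf{s}_2^b)$. Chaining the two inequalities delivers the theorem.

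The main obstacle I anticipate is carefully justifying the conditional independence underlying each Markov chain, given that $\mathbf{s}_1$ and $\mathbf{s}_2$ are themselves statistically dependent. The key observation is that all of the cross-view dependence is already packaged inside the pair $(\mathbf{s}_1^a,\mathbf{s}_2^a)$; no new coupling can be introduced by subsequent layers because those layers act on each stream in isolation. This is essentially the block-separable structure of the encoder, and it mirrors the Markov chains $\mathbf{s}_2 \to \mathbf{s}_1 \to \mathbf{s}_1^j$ and $\mathbf{s}_1 \to \mathbf{s}_2 \to \mathbf{s}_2^j$ already flagged in the excerpt. Once this separation is stated cleanly (e.g., as a short lemma asserting $p(\mathbf{s}_1^b\mid\mathbf{s}_1^a,\mathbf{s}_2^a)=p(\mathbf{s}_1^b\mid\mathbf{s}_1^a)$ and symmetrically), the rest of the proof is two lines.

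Finally, I would add a brief interpretive remark: Theorem~\ref{MI Reduction} shows that naive deep encoding monotonically erodes cross-view MI, with equality iff $\mathbf{s}_1^a$ and $\mathbf{s}_2^a$ are sufficient statistics for $\mathbf{s}_1^b$ and $\mathbf{s}_2^b$ with respect to one another. This monotone degradation is precisely what motivates the CVIE mechanism introduced later, which reinjects inter-view exchange so that the consistency component highlighted in Section~\ref{sec:Analysis1} is not irretrievably lost through depth.
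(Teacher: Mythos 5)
Your proof is correct, and it reaches the result by a slightly different route than the paper. The paper proves the inequality in one shot via the data-processing inequality for \emph{relative entropy}: it writes $I(\mathbf{s}_1^a;\mathbf{s}_2^a)=D_{KL}\bigl(p_{\mathbf{s}_1^a,\mathbf{s}_2^a}\,\|\,p_{\mathbf{s}_1^a}p_{\mathbf{s}_2^a}\bigr)$ and observes that pushing both the joint law and the product of marginals through the map $(\mathbf{s}_1^a,\mathbf{s}_2^a)\mapsto(\mathbf{s}_1^b,\mathbf{s}_2^b)$ cannot increase their divergence. You instead apply the mutual-information form of the DPI twice, through the chains $\mathbf{s}_2^a \to \mathbf{s}_1^a \to \mathbf{s}_1^b$ and $\mathbf{s}_1^b \to \mathbf{s}_2^a \to \mathbf{s}_2^b$, and chain the resulting inequalities. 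The two arguments rest on the same structural fact, but yours makes explicit where that fact is used: the paper's one-line invocation silently requires that the map carry $p_{\mathbf{s}_1^a}p_{\mathbf{s}_2^a}$ to $p_{\mathbf{s}_1^b}p_{\mathbf{s}_2^b}$, which holds only because the encoder acts on each stream in isolation (a product channel $T_1\otimes T_2$); your two Markov chains are precisely the statement of that product structure, so your write-up is arguably the more self-contained of the two, and it extends verbatim to stochastic per-view blocks. One small caution on your closing interpretive remark: the equality condition for the DPI is that the \emph{reverse} chains also hold (e.g., $\mathbf{s}_2^a \to \mathbf{s}_1^b \to \mathbf{s}_1^a$), i.e., the deeper feature $\mathbf{s}_1^b$ remains a sufficient statistic of $\mathbf{s}_1^a$ for the other view --- not quite the direction in which you phrased it.
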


\begin{proof}
Let $p_{{\mathbf{s}_1^a}, {\mathbf{s}_2^a}} = p_1^a$, $p_{\mathbf{s}_1^a}p_{\mathbf{s}_2^a} = p_2^a$, $p_{{\mathbf{s}_1^b}, {\mathbf{s}_2^b}} = p_1^b$, $p_{\mathbf{s}_1^b}p_{\mathbf{s}_2^b} = p_2^b$. By the data-processing inequality applied to relative entropies (see \cite{Thomas2006} pp. 370–371), we have 
\begin{equation}
    D_{KL}(p_{1}^a||p_{2}^a) \geq D_{KL}({p}_{1}^b||{p}_{2}^b)
\end{equation}
According to the definition of MI $I(\mathbf{s}_1^a;\mathbf{s}_2^a)=D(p_{\mathbf{s}_1^a}\parallel p_{\mathbf{s}_2^a})$, {Theorem~\ref{MI Reduction} is proved.}
\end{proof}

\begin{remark}
	\label{remark2}
{From a VAE perspective, the encoder and decoder play game-like roles: the encoder compresses the input, while the decoder reconstructs. In this game-like structure, the decoder attempts to preserve as much reconstruction-relevant information as possible by learning robust mappings from latent representations, even under noisy channels and imperfect CSI conditions. This inspires our decoder-side CVIE and CCF modules in Section \ref{sec:Decoding}, which exploits cross-view consistency and complementarity. }
\end{remark}

\subsubsection{Mutual Information Analysis under Noisy Channel} \label{sec:Analysis3}
To eliminate the impact of noise on the correlated sources, we should first understand the impact of noise. 
\begin{assumption}
	\label{assumption1}    
{From the encoding perspective, $\mathbf{z}_1$ and $\mathbf{z}_2$ represent the noisy channel outputs corresponding to different views. A well-designed encoder is expected to retain at least partial MI with the original source. Unless the channel is completely destructive, it is naturally and empirically assumed that $\mathbf{x}_1$ and $\mathbf{z}_1$ are not statistically independent, i.e., $I(\mathbf{x}_1; \mathbf{z}_1) > 0$. Furthermore, in practical scenarios (e.g., stereo vision, multi-camera surveillance, or distributed sensing), the encoded representations $\mathbf{x}_1$ and $\mathbf{x}_2$ are statistically correlated. Given that the channel is not completely destructive, it is empirically reasonable to assume that $\mathbf{z}_2$, as the noisy observation of $\mathbf{x}_2$, retains partial information about $\mathbf{x}_1$. Thus, we assume $I(\mathbf{x}_1; \mathbf{z}_2) > 0$.
}
\end{assumption}

For the transmission of $\mathbf{x}_1$, we have $I(\mathbf{x}_{1};\mathbf{z}_1|\mathbf{x}_{2})\leq I(\mathbf{x}_{1};\mathbf{z}_1)$ since the decoder can use the correlated information $\mathbf{x}_2$. In practice, $\mathbf{x}_2$ is often not lossless (recall Remark \ref{remark1}). Hence, the actual optimization goal is,  $I(\mathbf{x}_{1};\mathbf{z}_1|\mathbf{z}_2)$. Based on Assumption \ref{assumption1}, we have $I(\mathbf{x}_{1};\mathbf{z}_1|\mathbf{z}_2) < I(\mathbf{x}_{1};\mathbf{z}_1)$, which ensures that the noisy correlated sources can still help decode. In addition, Theorem \ref{MI Reduction} still holds when multi-view encoded representations are transmitted over a noisy channel as we have
	\begin{equation} 
	I({\mathbf{x}}_1; {\mathbf{x}}_2) \geq I(\mathbf{z}_{1}; \mathbf{z}_{2}).
\end{equation}
{For the sake of further analysis, we adopt a simplified one-dimensional Gaussian model based on Assumption~\ref{assumption2}. Specifically}, following Eq. \eqref{ch1}, the received signals can be modeled as
\begin{equation}
{\mathcal{Z}_1 = \mathcal{H}_1 \mathcal{X}_1 + \mathcal{W}_1, \quad \mathcal{Z}_2 = \mathcal{H}_2 \mathcal{X}_2 + \mathcal{W}_2},
\end{equation}
{Assuming that $\mathcal{H}_1$ and $\mathcal{H}_2$ are estimated by} Eq. \eqref{CSI_error}, {the MI between the received signals $\mathcal{Z}_1$ and $\mathcal{Z}_2$ can be derived as:}
\begin{equation}
{I(\mathcal{Z}_1;\mathcal{Z}_2) = -\frac{1}{2}\log(1 - r'^2)},
\end{equation}
{where $r'$ denotes the correlation coefficient between $\mathcal{Z}_1$ and $\mathcal{Z}_2$, which can be expressed as}
\begin{equation}
{r' = \frac{r \sigma_{x_1} \sigma_{x_2}}{
\sqrt{
\left( \sigma_{x_1}^2 + \frac{\sigma_{\tilde w_1}^2}{\hat{\mathcal{H}}_1^2} \right)
\left( \sigma_{x_2}^2 + \frac{\sigma_{\tilde w_2}^2}{\hat{\mathcal{H}}_2^2} \right)
}},}\label{noisy_correlation}
\end{equation}
{where $\sigma_{\tilde w_1}^2$ and $\sigma_{\tilde w_2}^2$ denote the equivalent noise from two views respectively,} as Eq. \eqref{equivalent_noise}. {Obviously, we have $0 \leq r' \leq r$ and thus $I(\mathcal{Z}_1; \mathcal{Z}_2) \leq I(\mathcal{X}_1,\mathcal{X}_2)$.}

According to the above analysis, we notice that the reconstruction consistency of multi-view encoded representations deteriorates over the noisy channel. 
{Hence, it is necessary to design a flexible multi-view transmission framework to meet the requirements of consistency and complementarity in a distributed image transmission scenario. This is also aligned with Remark~\ref{remark3}.}

\section{Method}\label{sec:Decoding}
Guided by the analysis in Section \ref{sec:Analysis}, we propose a distributed wireless image transmission scheme in this section. The scheme
is composed of a novel cross-view information extraction (CVIE) mechanism and a complementarity-consistency fusion (CCF) mechanism.

\subsection{Cross-View Information Extraction}
{Remark~\ref{remark2} motivates the placement of CVIE modules at the decoder side.} Specifically, 
CVIE uses features generated by three individual $1 \times 1$ convolution as inputs to perform cross-view interactions. The objective of CVIE is to learn cross-view information. CVIE is inspired by CAM \cite{Wang2022ICASSP,zhang2023ICLR}, as shown in Fig. \ref{TWC_acmix1}(a). CAM uses $1 \times 1$ convolutions $W_q$, $W_k$, $W_v$ to map the view to its query, key and value as
\begin{equation}
	\label{Cross-attention-mechanism}   
 q_{ij}^2=W_q\mathbf{z}_{ij}^2,k_{ij}^1=W_k\mathbf{z}_{ij}^1,\upsilon_{ij}^1=W_v\mathbf{z}_{ij}^1,
\end{equation}
where $q_{ij}^2$ denotes the query from the second view at a local region of pixels $(i,j)$,  $k_{ij}^1$ and  $v_{ij}^1$ are from the first view. Like widely adopted self-attention mechanism, CAM can be formulized as
\begin{equation}
	\label{Cross-attention-mechanism} 
	\mathrm{Attention}(q_{ij}^2,k_{ij}^1,v_{ij}^1)=\mathrm{softmax}({q_{ij}^2k_{ij}^1})v_{ij}^1, 
\end{equation}
where ${q_{ij}^2k_{ij}^1}$ denotes the similarity score of two views, the $\mathrm{softmax}$ function is applied to convert the unnormalized similarity score into similarity weights and these similarity weights are then used to compute a weighted sum of $v_{ij}^1$. 
\begin{figure}[htbp]
	\centering
	\includegraphics[height=2.8cm, width=8cm] {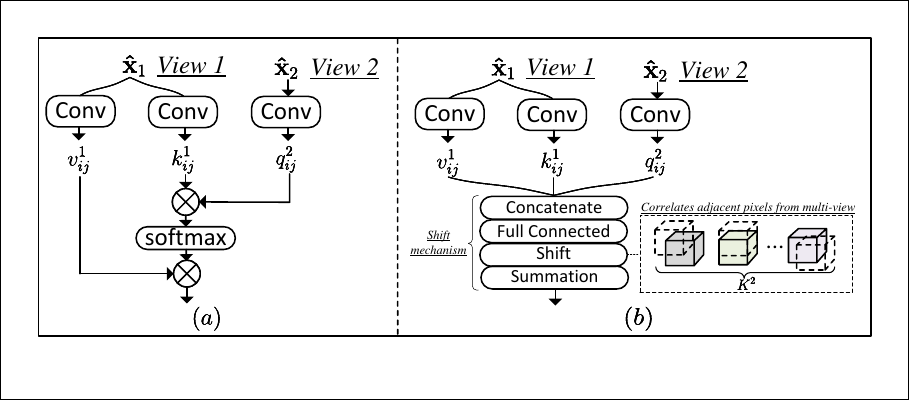}
	\caption{(a) Cross attention mechanism. (b) cross-view information extraction.}
	\label{TWC_acmix1}
\end{figure}

To capture more detailed cross-view patterns and dependencies, we propose CVIE shown in Fig. \ref{TWC_acmix1}(b). 
CVIE leverages the inherent principles of standard convolution for image coding. Standard convolution is a cross-correlation operation (see \cite{Zhang2021rXiv} pp. 246–247), which calculates the cross-correlation between input tensor $\mathbf{x}_{i,j}$ and a $K \times K$ kernel tensor $\boldsymbol{\mathcal{F}}$ as
\begin{equation}
	\label{Traditional convolution}	
	\begin{aligned}\mathbf{y}_{i,j}=\sum_{p,q}\boldsymbol{\mathcal{F}}_{p,q}\mathbf{x}_{i+p-\lfloor K/2\rfloor,j+q-\lfloor K/2\rfloor}\end{aligned},
\end{equation}
where the kernel weights $\boldsymbol{\mathcal{F}}_{p,q}$ at positions $p, q \in {0, 1, \ldots, K-1}$ characterize the weights associated with the kernel at the position $(p, q)$. 
Inspired by \cite{Pan2022CVPR}, which proves that standard convolution with kernel size $K \times K$ can be decomposed into $K^2$ individual 1 × 1 convolutions by the shift mechanism. We initially perform individual 1 × 1 convolutions from two views and obtain query, key and value as CAM. Next, a shift mechanism is adopted. Specifically, query, key and value are concatenated as
\begin{equation}
	\label{Cross-attention-mechanism} 
	\mathcal{Y}_{ij} = \mathrm{Concatenate}[q_{ij}^2,k_{ij}^1,v_{ij}^1], 
\end{equation}
where $\mathcal{Y}_{ij} \in \mathbb{R}^{3 \times C \times HW}$ is output,  $C$, $H$ and $W$ are the height, width, and channel of the output features, respectively. To fuse the information from multi-views, we utilize a multilayer perceptron (MLP) to generate $K^2$ multi-view features as $\mathcal{Y}_{ij}^{fc} \in \mathbb{R}^{K^2 \times C \times HW}$. Since these features are generated from 1 × 1 convolution, the correlation of adjacent pixels is not calculated. We resort to shift operation, which is the core of the shift mechanism. Shift operation can be expressed as
\begin{equation}
	\label{Shift} 
	\mathcal{Y}_{i,j}^{shift}=\mathrm{Shift}[\mathcal{Y}_{ij}^{fc},\Delta x, \Delta y]={\mathcal{Y}}_{i+\Delta x,j+\Delta y}^{fc},
\end{equation}
where $\Delta x, \Delta y$ correspond to the horizontal and vertical displacements, respectively. As shown in Fig. \ref{TWC_acmix1}(b), shifting features towards various directions to correlate adjacent pixels from multi-view, which further capture more nuanced cross-view patterns and dependencies. Finally, we fuse these $K^2$ features in each direction by summation operation as
\begin{equation}
	\label{Sum} 
	\mathcal{Y}_{i,j}^{Sum}=\sum_{p,q}\mathcal{Y}_{i,j}^{shift}.
\end{equation}

According to the above process, CVIE can compute cross-correlation between two views.

\begin{remark}
	\label{remark5}
{Unlike standard dot-product attention, our method adaptively fuses QKV features from different views via an MLP, guided implicitly by principles of complementarity and consistency. This data-driven fusion replaces explicit similarity scoring, enabling flexible and effective cross-view interaction.}
\end{remark}

\subsection{Complementarity-Consistency Fusion Mechanism} \label{sec:Fusion}
To further fuse the complementarity and consistency from multi-view information in a symmetric and compact manner, we further propose a CCF mechanism based on CVIE. As shown in Fig. \ref{TWC_acmix2}(a), we add a parallel path for single-view consistency information extraction on the basis of CVIE. The reused individual $1 \times 1$ convolution reduces computational overhead and improves the model capacity. Complementary and consistent information are weighted dynamically by $\mathcal{K}_1$, $\mathcal{K}_2$ and then summed. $\mathcal{K}_1$ and $\mathcal{K}_2$ are based on the dynamic weight assignment (DWA) policy shown in Fig. \ref{TWC_acmix2}(b). DWA is mainly composed of an MLP with activation function $\mathrm{ReLU}$ / $\mathrm{Softmax}$. {The input of DWA consists of SNRs of two independent OFDM channels $SNR_1$, $SNR_2$ and squared cosine similarity (SCS) of OFDM signals from two views computed as \cite{WF2023WCL,Nguyen2024PACC}} \footnote{{Since the cosine similarity is only employed as an auxiliary consistency measure, both the absolute value and the squared form SCS are theoretically valid and yield comparable performance in practice. 
}}
\begin{equation}
	\label{SCS_eq}
	\mathrm{cos}^2(\mathbf{x}_1,\mathbf{x}_2)\triangleq\frac{\langle\mathbf{x}_1,\mathbf{x}_2\rangle^2}{\|\mathbf{x}_1\|^2\|\mathbf{x}_2\|^2}.
\end{equation}
These three metrics are concatenated as Eq. (\ref{Cross-attention-mechanism}). The output of DWA can be modeled as a Bernoulli random variable $\mathcal{K} \sim \text{Bern}(p)$, where $\mathcal{K} \in \{\mathcal{K}_1, \mathcal{K}_2\}$.

\begin{figure}[htbp]
	\centering
	\includegraphics[height=2.8cm, width=8cm] {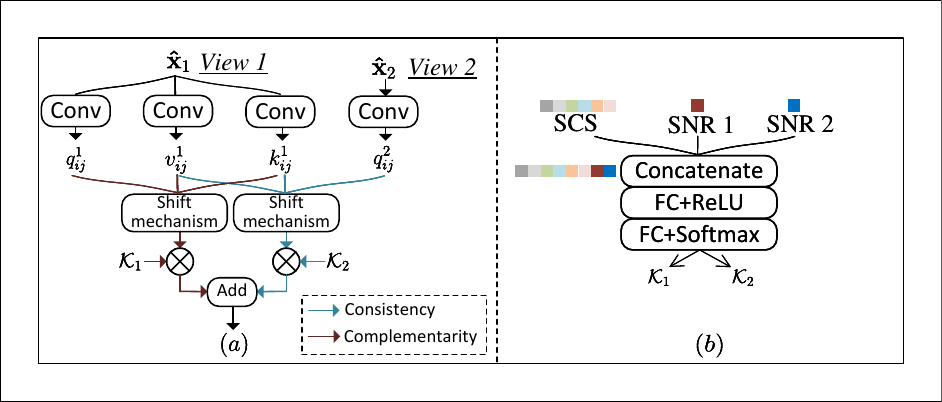}
	\caption{(a) Complementarity-consistency fusion mechanism. (b) Dynamic weight assignment policy.}
	\label{TWC_acmix2}
\end{figure}

The reasons why adopting SCS and SNR to balance the consistency and complementarity are as follows: First, as Remark~\ref{remark3}, the MI is notoriously difficult
to compute, particularly in continuous and high-dimensional
encoded representations. Secondly, CSI estimation errors can be modeled as an equivalent increase in noise power, as reflected in the received signal model in Eq. \eqref{received1}, which further degrades the effective receive SNR and correlation in Eq. \eqref{noisy_correlation}. Finally, the relationship between the SCS of two encoded representations and SNR is shown in Fig. \ref{cor analysis}. It can be noticed that a relatively high-quality channel corresponds to a relatively high SCS. Based on Eq. (\ref{noisy_correlation}), better channel conditions may lead to higher correlation and thus enhance the MI of two views under the assumption of jointly normal correlation (Assumption \ref{assumption2}). Therefore, Fig. \ref{cor analysis} indicates that the MI of received encoded representations between two views can be roughly predicted according to the SCS. Similar to Section \ref{sec:Correlation}, the further theoretical analysis for the impact of encoding process on SCS is given in Appendix \ref{sec:Impacts}.

\begin{figure}[htbp]
	\centering
	\label{Experiment 1}
	\includegraphics[height=4.5cm, width=5.5cm] {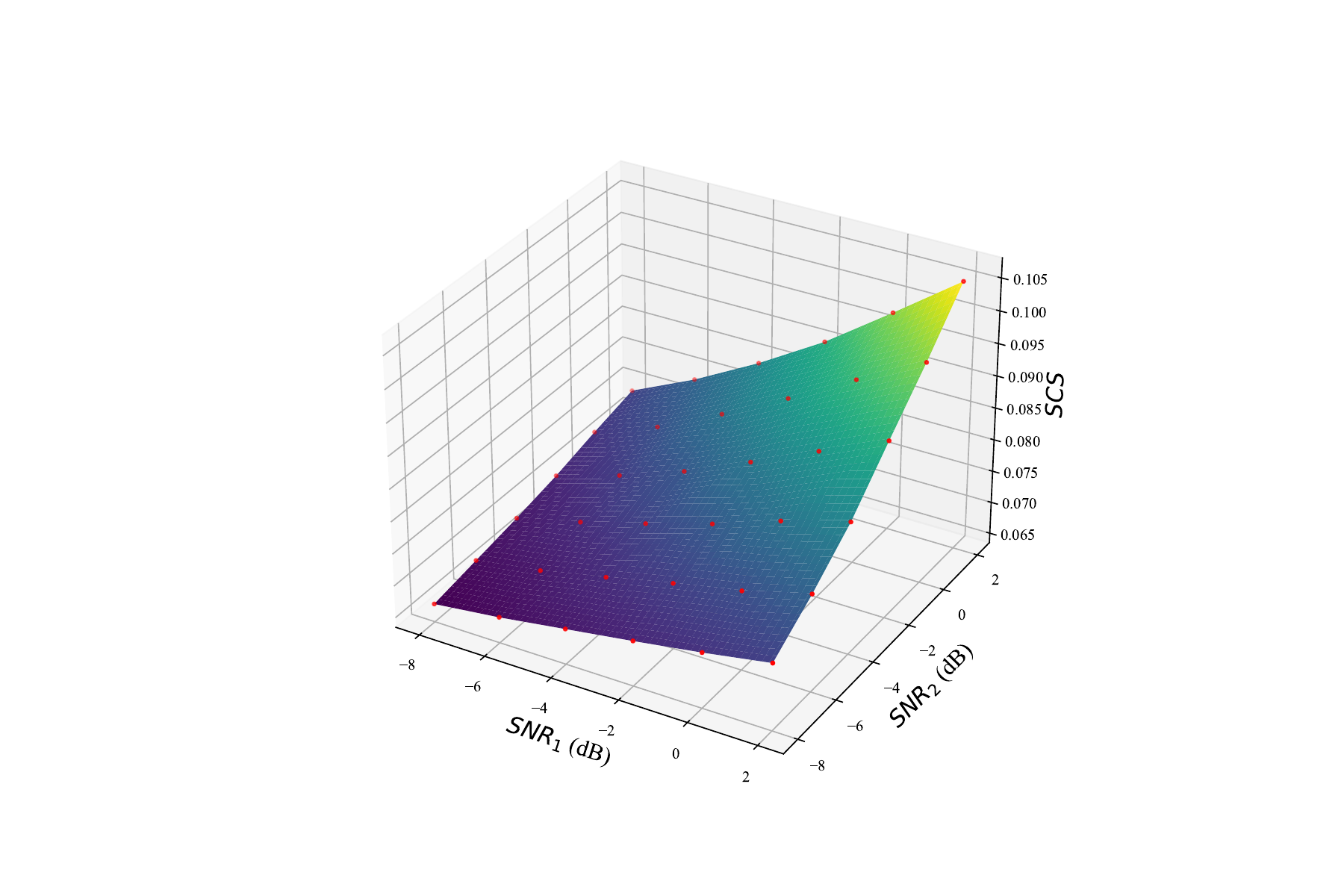}
	\caption{The squared cosine similarity of two received encoded representations under different SNRs.}
	\label{cor analysis}
\end{figure}

\section{Experiments} \label{sec:Experiments}
Next, we validate the performance of the proposed RDJSCC. 
First, we present the simulation settings. Then, we evaluate and analyze the performance.

\subsection{Datasets and Simulation Settings}
\subsubsection{Dataset} We quantify the distributed image transmission performance on {the} RGB Cityscapes dataset \cite{Cordts2016CVPR}, which is composed of stereo image pairs, and each pair is captured by a pair of cameras at the same moment. We adopt 2975 pairs for training, 500 pairs for validation, and 1525 pairs for testing. Each image of Cityscapes is downsampled to 128 × 256 pixels. In addition, KITTI dataset is used \cite{Geiger2012}. We adopt 1576 pairs for training, 790 pairs for validation, and 790 pairs for testing. Following \cite{Mital2022DCC,Wang2022ICASSP}, each image with 375 × 1242 pixels is centre-cropped and downsampled to 128 × 256 pixels.

\begin{table}[htbp]
	\caption{Simulation parameter settings.}
	\label{table0}
	\centering
         \small
	\begin{tabular}{|c|c|c|}
		\hline
		\textbf{}                                     & \textbf{Parameters} & \textbf{Value} \\ \hline
		\multirow{3}{*}{Channel environment} & $L$                   & 8              \\ \cline{2-3} 
		& $\gamma$                   & 4              \\ \cline{2-3} 
		& $SNR$                   & {[}-8,2{]} dB   \\ \hline
		\multirow{5}{*}{OFDM settings}                
		&$N_p$                 & 2               \\ \cline{2-3} 
		&$N_s$                   &  3              \\ \cline{2-3} 
		&$N_c$                  & 2048             \\ \cline{2-3}
            &$L_{\mathrm {cp}}$            & 16             \\ \cline{2-3}
		&$B_1=B_2$                  &  1/6 or 1/12         \\ \cline{2-3}
            &$P_{\mathrm{total}_1}=P_{\mathrm{total}_2}$                  &  0.5         \\ \hline
		\multirow{3}{*}{Training parameters} & Epoch               & 200            \\ \cline{2-3} 
		& Initial learning rate       &     $10^{-4}$          \\ \cline{2-3} 
		& Batch size          &      8          \\ \hline
	\end{tabular}
\end{table}

\subsubsection{Simulation Details}
Simulation parameter settings are given in Table \ref{table0}. We compare the proposed RDJSCC with CAM-based DJSCC \cite{Wang2022ICASSP} under the same settings. The initial learning rate is $10^{-4}$, with a 50\% decay after 100 training epochs. Mixed SNRs training is employed, with SNR sampled from a uniform distribution [-8, 2] dB \footnote{{To emulate severe fading conditions, we focus on the low SNR regime ($\leq$ 2 dB) in our simulations.}}.

\begin{figure*}[!t]
	\centering
	\subfigure[]{
		\label{Experiment 1}
		\includegraphics[height=4cm, width=5cm] {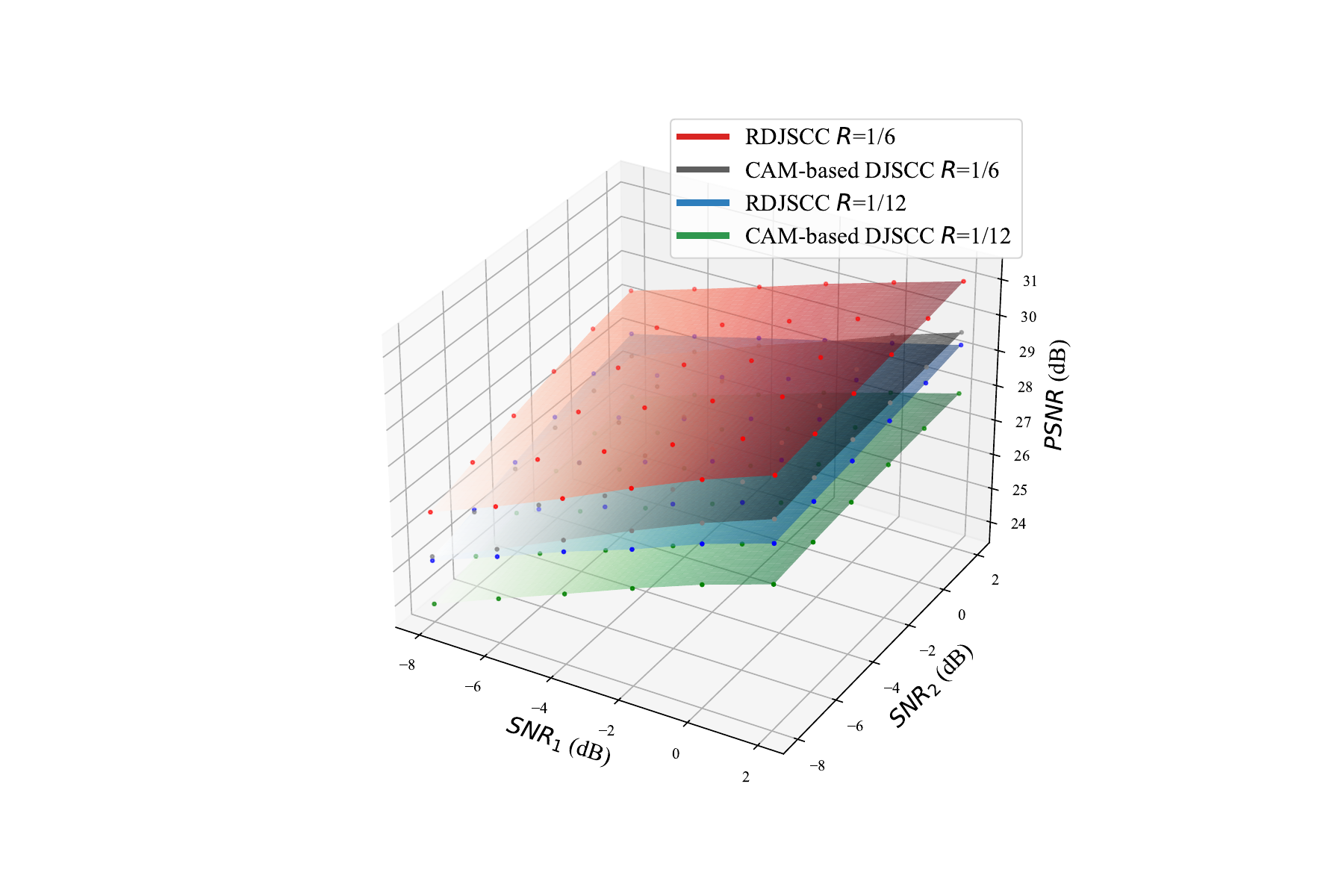}}
	\subfigure[]{
		\label{Experiment 2}
		\includegraphics[height=4cm, width=5cm] {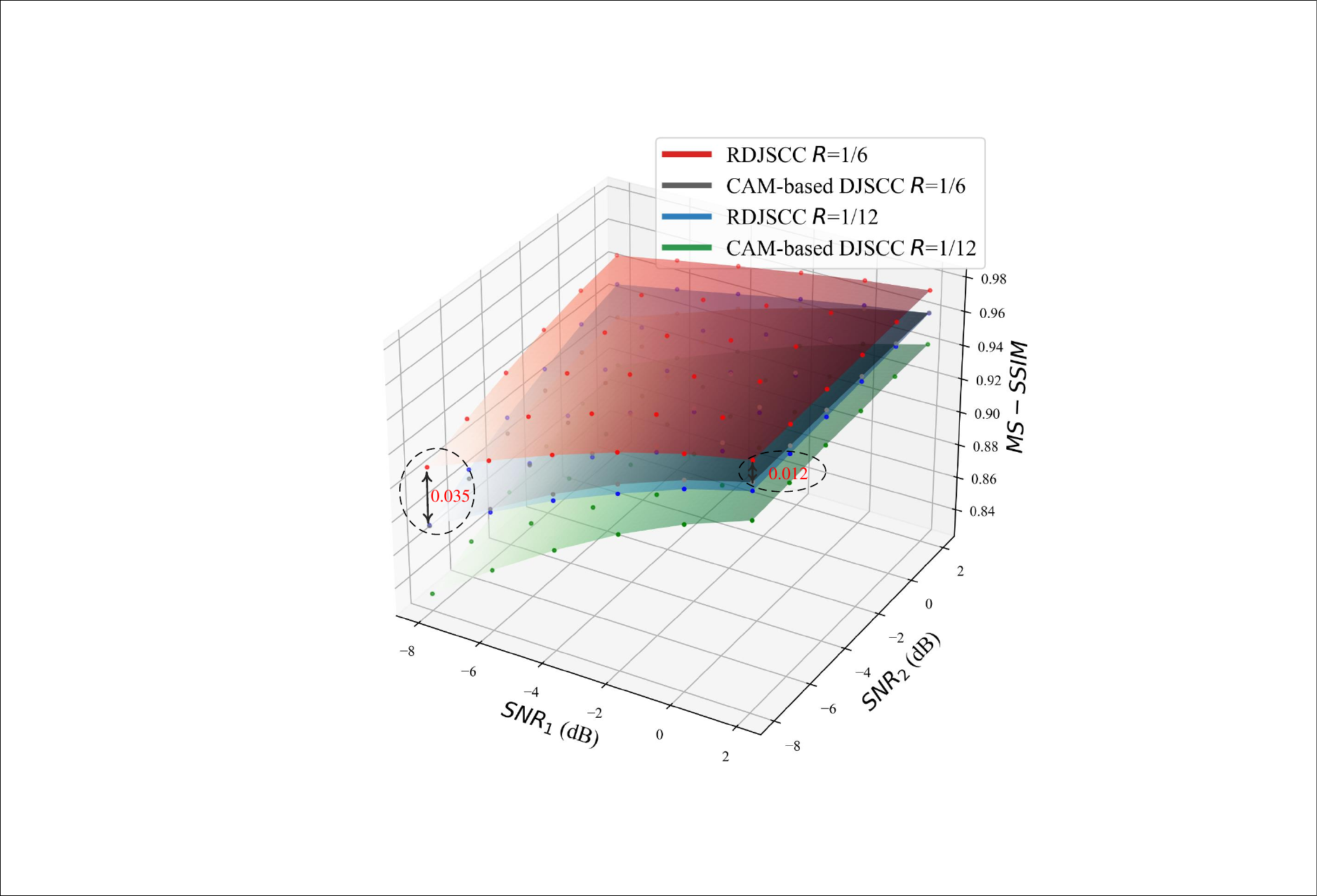}}
	\subfigure[]{
		\label{Experiment 3}
		\includegraphics[height=4cm, width=5cm] {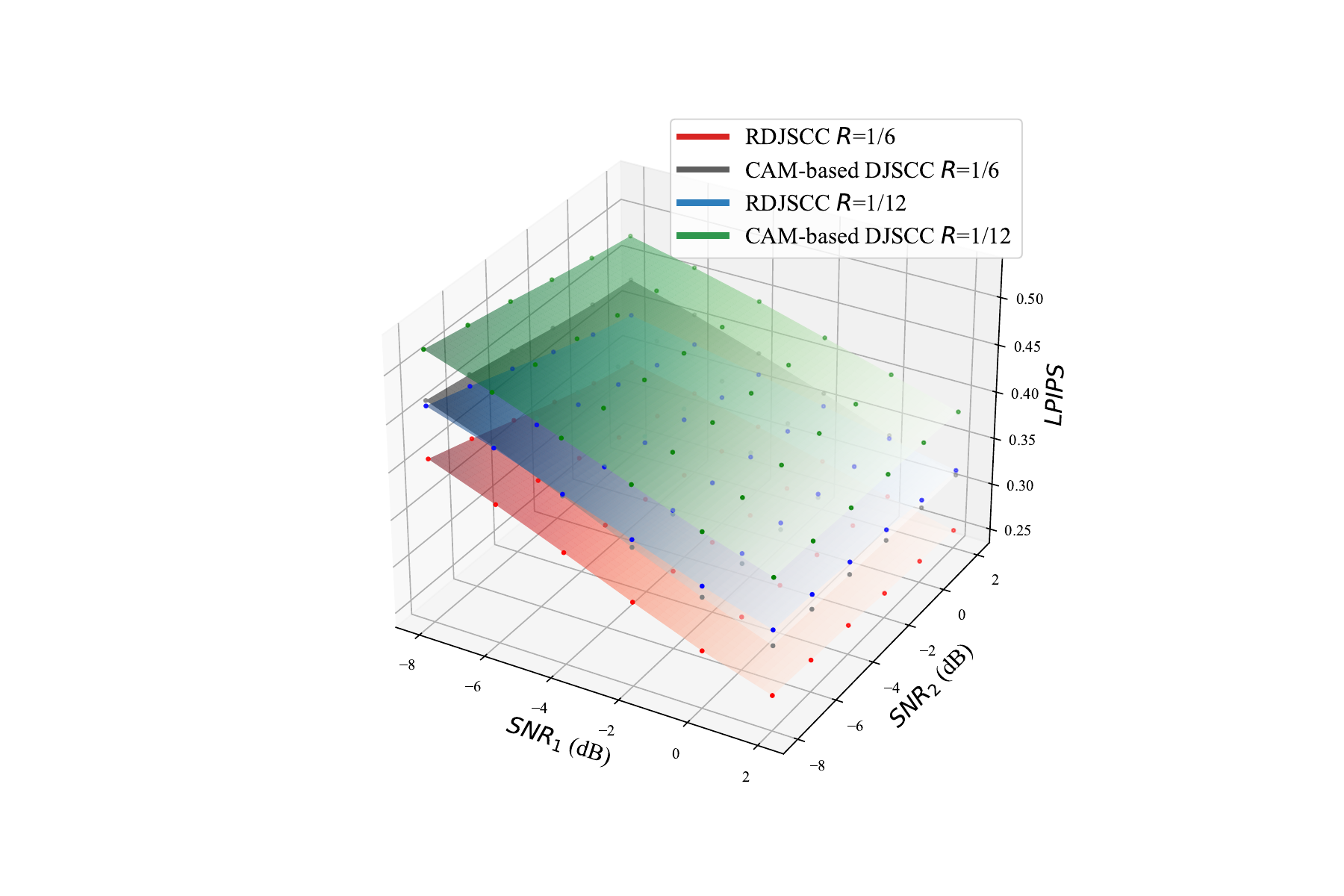}}
		\subfigure[]{
		\label{Experiment 4}
		\includegraphics[height=4cm, width=5cm] {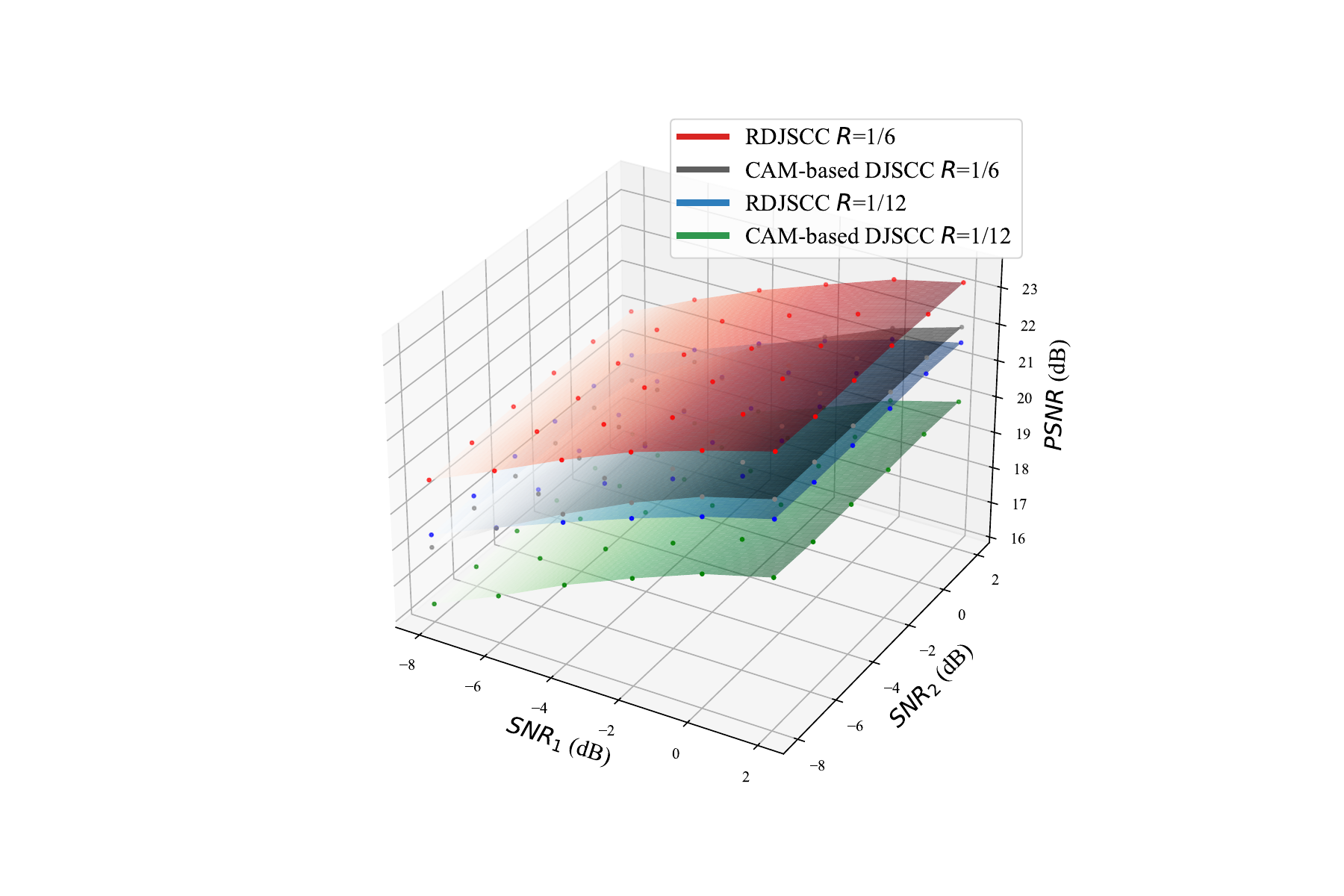}}
	\subfigure[]{
		\label{Experiment 5}
		\includegraphics[height=4cm, width=5cm] {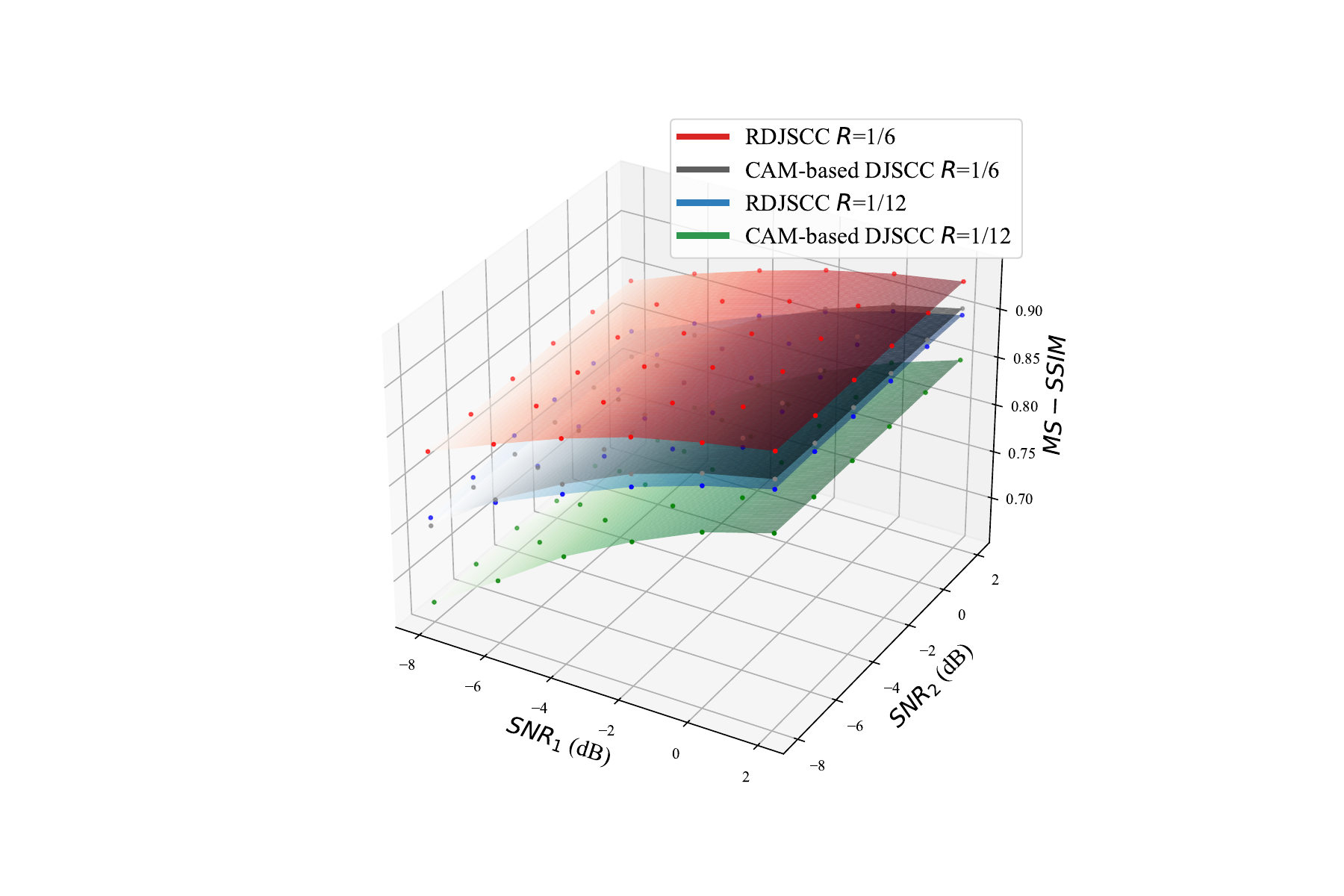}}
	\subfigure[]{
		\label{Experiment 6}
		\includegraphics[height=4cm, width=5cm] {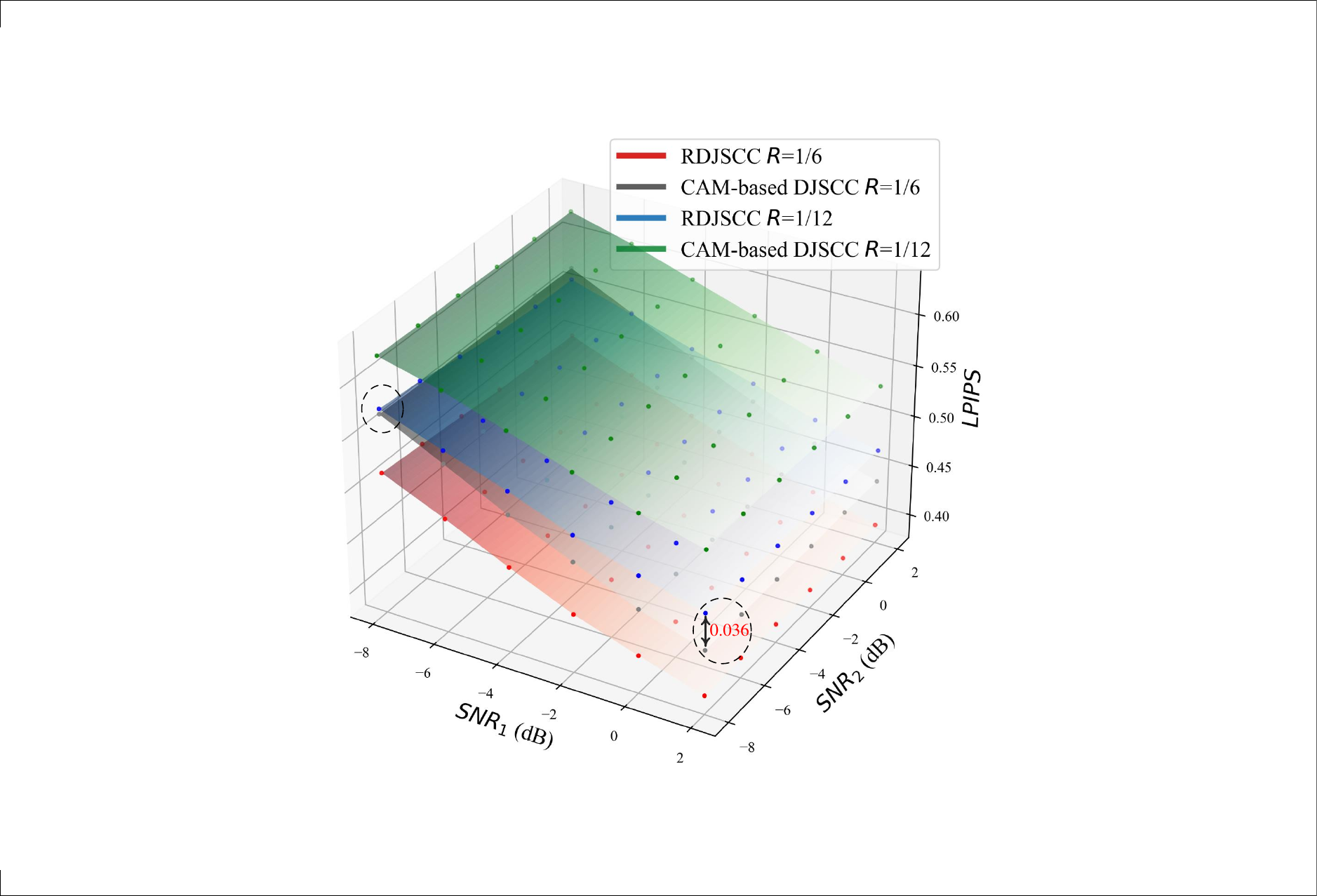}}
	\caption{Validating the effectiveness of the proposed RDJSCC: (a)-(c) Reconstruction performance of different methods when adopting PSNR, MS-SSIM and LPIPS as metrics respectively under Cityscapes dataset. (d)-(e) the Reconstruction performance under the KITTI dataset. 
	}
        \label{reconstruction_performance}
\end{figure*}

\subsection{Performance Metrics}\label{sec:Metrics}
\subsubsection{Distributed Image Transmission Performance}
We use PSNR, MS-SSIM and LPIPS to comprehensively evaluate the distributed image transmission performance of the proposed model. 
PSNR is a commonly used metric based on the pixel-wise MSE between input $\mathbf{s}$ and output $\mathbf{\hat s}$ as
\begin{equation}
	\label{PSNR}
	\mathrm{PSNR}(\mathbf{s},\mathbf{\hat s})\triangleq 10\log_{10}\frac{255^2}{d(\mathbf{s},\mathbf{\hat s})} \mathrm{(dB)}.
\end{equation}

MS-SSIM is based on the similarity of images at different scales, which is more in line with human visual perception \cite{Wang2004ACS}. The similarity is calculated by structural similarity (SSIM) as 
\begin{equation}
	\label{SSIM}
	\mathrm{SSIM}(\mathbf{s},\mathbf{\hat s})\triangleq\frac{(2\mu_{\mathbf{s}}\mu_{\mathbf{\hat s}}+C_{1})(2\sigma_{\mathbf{s}\mathbf{\hat s}}+C_{2})}{(\mu_{\mathbf{s}}^{2}+\mu_{\mathbf{\hat s}}^{2}+C_{1})(\sigma_{\mathbf{s}}^{2}+\sigma_{\mathbf{\hat s}}^{2}+C_{2})},
\end{equation}
where $\mu_{\mathbf{s}}$ and $\mu_{\mathbf{\hat s}}$ are the means of images. $\sigma_{\mathbf{s}}^2$ and $\sigma_{\mathbf{\hat s}}^2$ are the variances. $\sigma_{\mathbf{s}\mathbf{\hat s}}$ is the covariance. $C_1$ and $C_2$ are constants used to stabilize the formula. Based on SSIM, MS-SSIM can be defined as
\begin{equation}
	\label{MS-SSIM}
	\mathrm{MS}\textendash\mathrm{ SSIM}(x,y)\triangleq\left(\prod_{j=1}^M\mathrm{SSIM}_j(x,y)\right)^{\frac1M},
\end{equation}
where $M$ is the number of scales.

LPIPS computes the dissimilarity within the feature space between the input image and the reconstruction, by leveraging a deep neural network, we have 
\begin{equation}
	\label{LPIPS}
	{\mathrm{LPIPS}(\mathbf{s}, \mathbf{\hat s})\triangleq\sum_{i=1}^{I}\frac1{H_iW_i}\sum_{h,w}\|\boldsymbol{w}^{(i)}\odot(\boldsymbol{y}^{(i)}-\boldsymbol{\hat{y}}^{(i)})\|_2^2,}
\end{equation}
{where $\boldsymbol{y}^{(i)},\boldsymbol{\hat{y}}^{(i)}\in\mathbb{R}^{H_i\times W_i\times C_i}$ are the intermediate features derived from the $i$-th layer of the employed network with $I$ layers.} 
$H_i,W_i$, and $C_i$ are the intermediate height, width, and channel dimensions of features, respectively,  $\boldsymbol{w}^{(i)}\in\mathbb{R}^{C_i}$ is the weight vector, and $\odot$ is the
channel-wise feature multiplication operation.

\subsubsection{Algorithm Complexity}
We evaluate the complexity from two commonly used metrics: the number of floating-point operations (FLOPs) and parameters. In addition, we provide decoding time and model weight size as the auxiliary evaluation metrics.

\subsubsection{PAPR Performance}
We measure the PAPR performance of the considered communication system as the ratio between the maximum power and the average power
\begin{equation}
	\mathrm{PAPR}\{\mathscr{X}_{\mathrm {ofdm}}\}=\frac{\max\left|\mathscr{X}_{\mathrm {ofdm}}\right|^2}{E\left\{\left|\mathscr{X}_{\mathrm {ofdm}}\right|^2\right\}}.
\end{equation}

\subsection{Distributed Image Transmission Performance}
\subsubsection{Performance Analysis}
 Fig. \ref{reconstruction_performance} shows the performance of different methods under different SNRs, compression ratios, and datasets (a higher PSNR/MS-SSIM or a lower LPIPS indicates a better performance). Since the two views are symmetric, we only report the performance of one view transmitted with $SNR_1$. Fig. \ref{reconstruction_performance} shows that the proposed RDJSCC achieves better performance under different SNRs, compression ratios, and datasets compared with CAM-based DJSCC \cite{Wang2022ICASSP}. It indicates that RDJSCC can learn a more optimal  maximum likelihood estimation solution than CAM-based DJSCC.  
  Fig. \ref{Experiment 2} shows that RDJSCC has a greater performance gain at low SNR compared with CAM-based DJSCC. This is because RDJSCC can leverage correlated sources effectively and thus enhance the reconstruction performance under poor channel conditions.
 Fig. \ref{Experiment 6} shows that the LPIPS score of RDJSCC $R$=1/12 almost achieves or even exceeds the CAM-based DJSCC $R$=1/6. It illustrates that RDJSCC has a better human perception quality in terms of LPIPS.

\subsubsection{Model Complexity Analysis}
Table \ref{table2} shows the algorithm complexity. Results indicate that the complexity of RDJSCC is higher than that of the CAM-based DJSCC. This is because RDJSCC introduces additional computation overhead with a light MLP and the shift mechanism in CVIE and CCF. {Although the complexity of RDJSCC is not advantageous, the actual inference time may not vary proportionally. First, CAM-based DJSCC introduces cross-attention mechanism which is typically more time-consuming than convolution for image processing tasks. Secondly, based on the well-optimized CNNs or MLPs inference library (e.g., cuDNN), the decoding time of RDJSCC is close to the CAM-based DJSCC. }

\begin{table}[htb]
    \caption{The complexity comparison when $R$ = 1/6 and Cityscapes dataset is adopted}
    \centering
    \resizebox{\linewidth}{!}{
        \begin{tabular}{|cc|cccc|}
            \hline
            & {\bf Methods} & {\bf FLOPs} & {\bf Parameters} & {\bf Weight} & {\bf Time} \\ \hline
            & CAM-based DJSCC\cite{Wang2022ICASSP} & {1.66 G} & {3.22 M} & {14.58 MB} & {47.68 ms} \\
            & Proposed RDJSCC & {2.38 G} & {3.76 M} & {16.42 MB} & {46.36 ms} \\ \hline
        \end{tabular}
    }
    \label{table2}
\end{table}

For a further analysis, theoretical complexity analysis is given. RDJSCC and CAM-based DJSCC both adopt individual $1 \times 1$ convolution to generate query, key, and value. The parameters of $1 \times 1$ convolution are $3C^2$, where $C$ denotes the number of convolutional channels. The computational complexity is $\mathcal{O}(3HWC^2)$, where $H$ and $W$ denote the height and width of the input features. 
The total computational complexity of CAM is $\mathcal{O}(3HWC^2+2HWK^2C)$. Compared with CAM, the additional parameters of RDJSCC brought by MLP are $3K^2$. It is much less than $3C^2$ (e.g., with $C$ often set empirically as 256 and $K$ as 3 or 5). The additional parameters of RDJSCC brought by the shift mechanism are $K^4C$, which is also light compared with $1 \times 1$ convolutions. Meanwhile, $1 \times 1$ convolutions are reused in CCF mechanism, which reduces computational overhead. The total computational complexity of RDJSCC is $\mathcal{O}(3HWC^2+3HWK^2C+K^4C)$. 

\begin{figure*}[!t]
	\centering
	\subfigure[]{
		\label{Experiment 7}
		\includegraphics[width=0.22\textwidth] {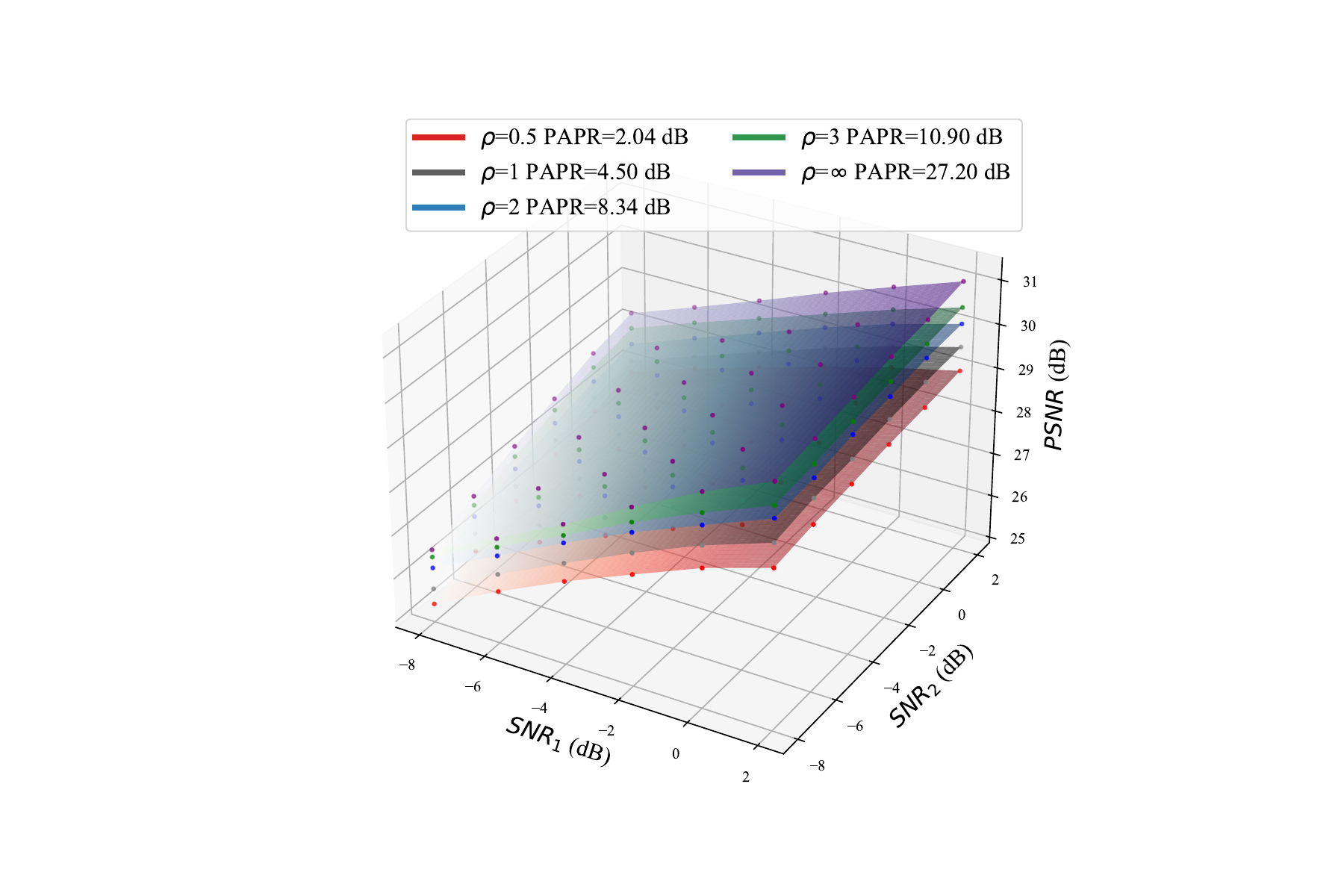}}
        \subfigure[]{
		\label{Experiment 8_1}
		\includegraphics[width=0.22\textwidth] {{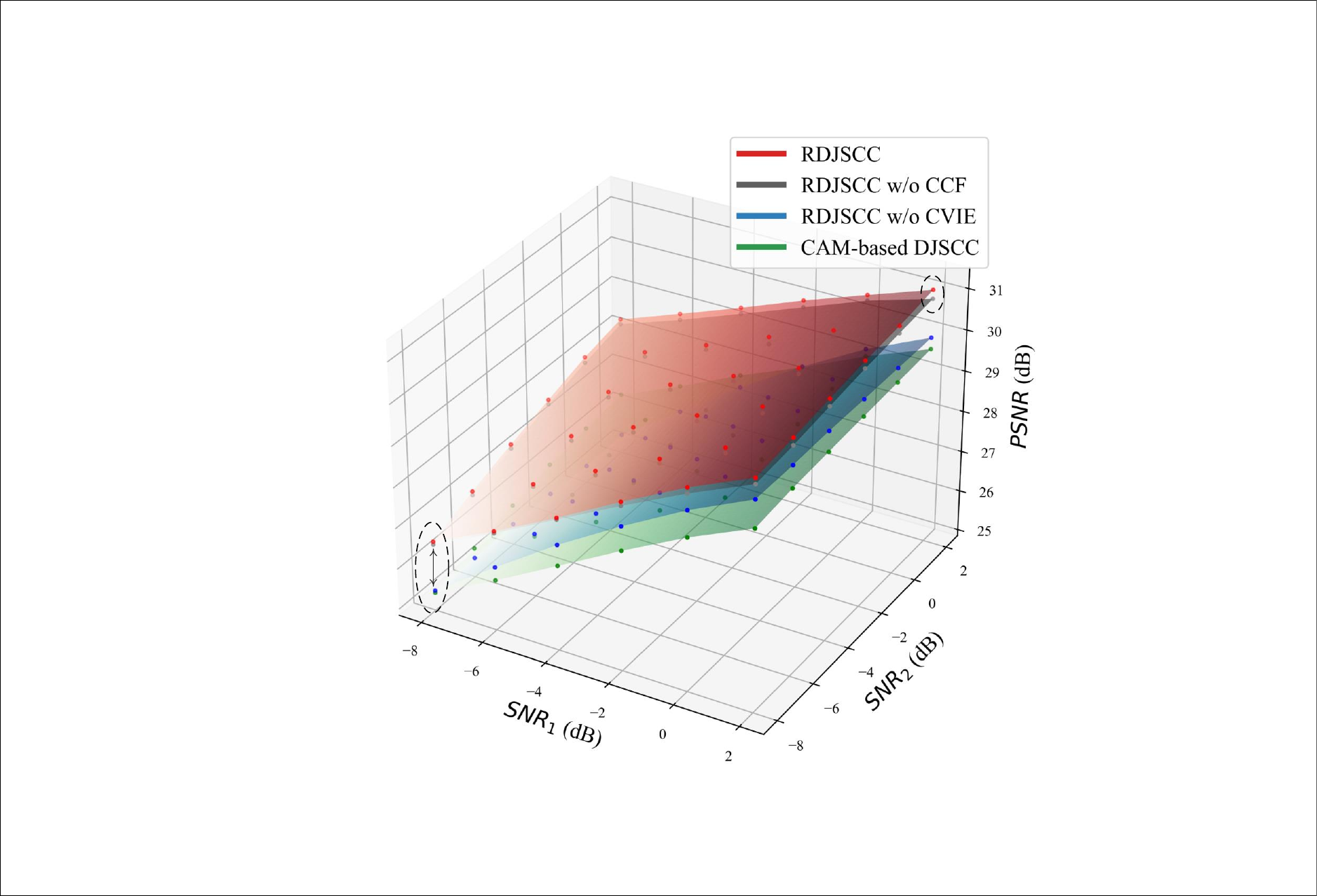}}}
        \subfigure[]{
		\label{Experiment 8_2}
		\includegraphics[width=0.23\textwidth] {{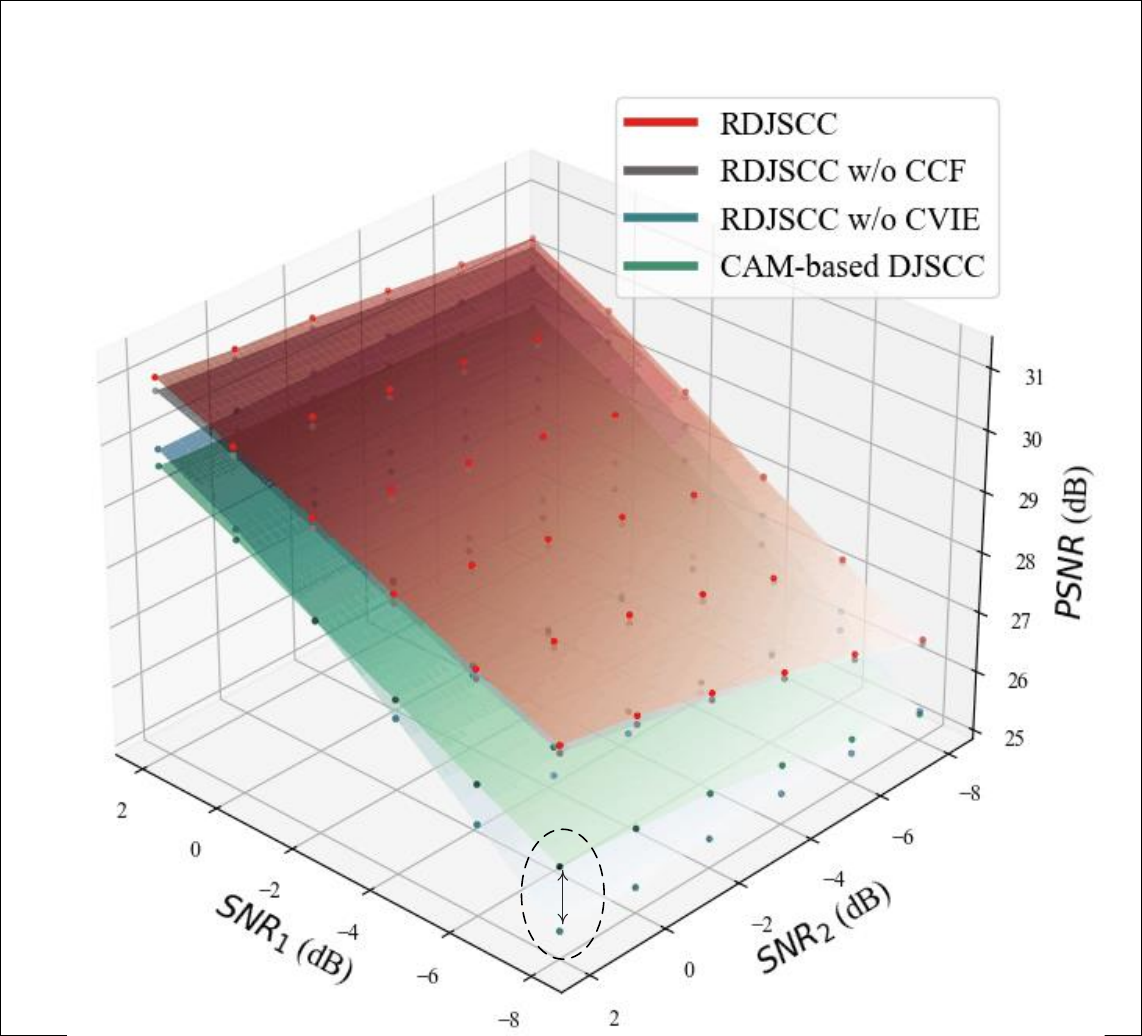}}}
	\subfigure[]{
		\label{Experiment 9}
		\includegraphics[width=0.25\textwidth] {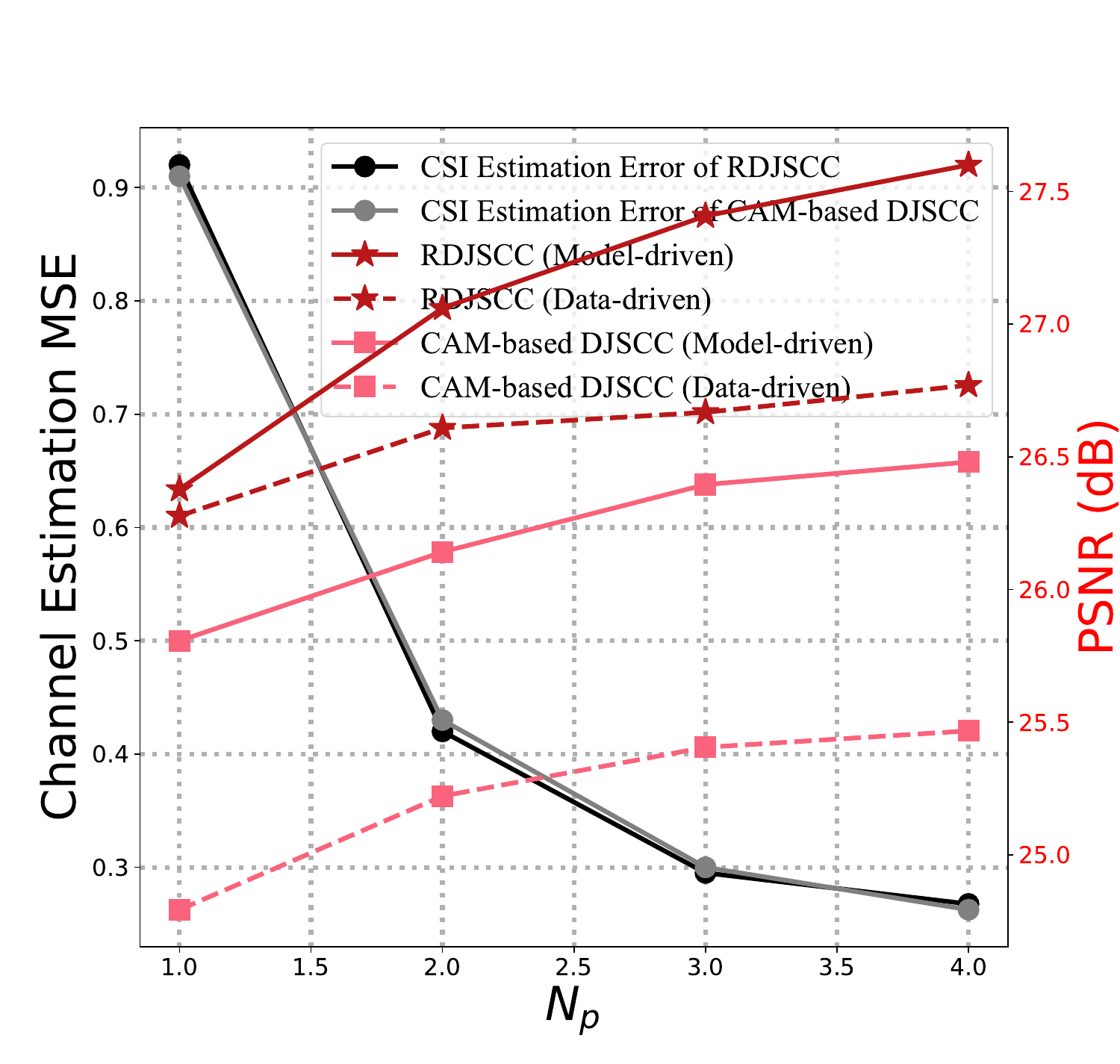}}
	\caption{(a) PAPR under different clipping ratios. {(b)-(c) Ablation study results in terms of the PSNR performance. To better present the 3D result, we separately show the two sides of the same image as illustrated in (b) and (c). (d) PSNR Performance with respect to the number of pilot symbols. }}
        \label{ex2}
\end{figure*}

\subsubsection{PAPR Reduction}
Fig. \ref{Experiment 7} shows the trade-off between PAPR and performance by clipping. It is obvious that the reconstruction performance decreases as clipping ratio decreases. This is because clipping disrupts the orthogonality among subchannels. However, we can strike a trade-off between performance and PAPR by setting an appropriate clipping ratio. For example, the reconstruction performance with $\rho = 3.0$ is nearly identical to the performance without clipping ($\rho = \infty$). The results are consistent with \cite{Yang2022TCCN,Shao2023WCL}. It indicates that the performance gain brought by OFDM against fading and the low PAPR can coexist in a distributed image transmission system.

\subsubsection{Ablation Study}
During the design process of RDJSCC, we use the CVIE to replace the CAM. Meanwhile, we have designed the CCF based on CVIE. To verify the effectiveness of these methods, we report the transmission performance in the following settings: we take off the CVIE and CCF respectively (marked as ``RDJSCC w/o CVIE'' and ``RDJSCC w/o CCF'' in Fig. \ref{Experiment 8_1}). Note that once CVIE is removed, no correlated sources are accessed. 

Fig. \ref{Experiment 8_1} shows that RDJSCC w/o CVIE has a performance degradation compared with RDJSCC. It indicates that CVIE can utilize correlated sources and thus enhance transmission performance. 
{Fig. \ref{Experiment 8_1} also shows that RDJSCC has a performance gain compared with RDJSCC w/o CCF. It demonstrates the benefit of fusing the complementarity and consistency through CCF. It also verifies that the performance gain of RDJSCC arises from the combination of CVIE and CCF. Fig. \ref{Experiment 8_2} shows an interesting trend that the CAM performs better when the channel condition of the other view is relatively good (i.e. when $SNR_2$ = 2 dB). In contrast, Fig. \ref{Experiment 8_1} shows that the CAM performs worse when the channel condition of the other view is poor (i.e. when $SNR_2$ = -8 dB). This indicates that CAM struggles to efficiently utilize noisy correlated sources, as Remark~\ref{remark5}.}

\subsubsection{Evaluation of CSI Estimation Errors}
{As shown in Fig. \ref{Experiment 9}, we evaluate the robustness of RDJSCC under varying levels of CSI estimation error and compare it with the CAM-based DJSCC benchmark. Two CSI estimation strategies are considered: model-driven and data-driven. The model-driven approach follows \cite{Yang2022TCCN}, where CSI is explicitly estimated via MMSE and optimized jointly with the decoder. In contrast, the data-driven method implicitly estimates CSI by directly feeding pilot symbols and received signals into the decoder without explicit supervision. To quantify the CSI estimation error, we vary the number of pilot symbols $N_p$. As expected, the model-driven approach outperforms the data-driven one, consistent with prior observations in \cite{Yang2022TCCN}. More importantly, under different CSI estimation error levels, RDJSCC consistently outperforms CAM-based DJSCC in both settings, demonstrating its superior adaptability to CSI inaccuracies.}

\subsection{Visual Comparison}
In Fig. \ref{reconstruction_performance}, we have already shown the reconstruction performance of the proposed method under different datasets and compression ratios. To visually illustrate the impact of fading channel, Fig. \ref{Visual}(a) presents examples of the reconstructed images. Results show that the proposed method {presents} a better recovery quality compared with CAM-based DJSCC.

Fig. \ref{Visual}(a) also shows that the semantic features of the encoder outputs from both methods differ significantly, despite the identical encoder structure. This difference occurs because the complexity of the decoder affects the encoder's learning process. In other words, a more complex decoder might necessitate richer and more intricate feature maps, whereas a simpler decoder might require only basic feature maps. Fig. \ref{Visual}(a) shows that the shallow features (e.g. the output of $E_1$) retain most of the details of the original image. As the encoder depth increases, the feature maps become more random and abstract, reflecting the optimization process of joint source-channel coding. 

 \begin{figure}[htbp]
	\centering
	\includegraphics[width=0.45\textwidth] {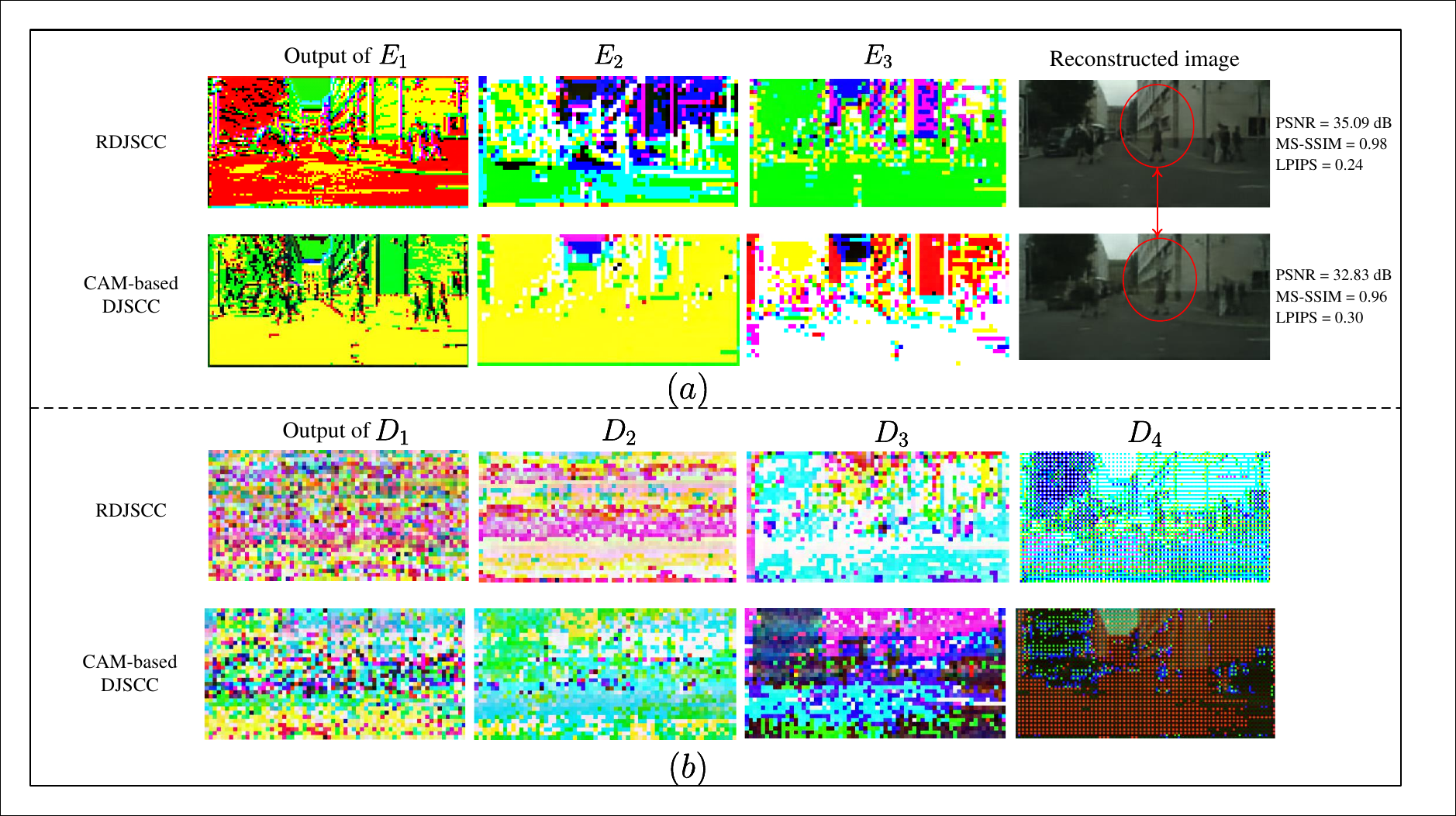}
	\caption{(a) The output semantic features of different stacked blocks at the encoder and the reconstructed image. (b) The output semantic features of different stacked blocks at the decoder.}
	\label{Visual}
\end{figure}

In Fig. \ref{Visual}(a), the deep encoded features (e.g. the output of $E_3$) also exhibit a visible structure. This indicates that a dependency structure exists in the semantic feature map, consistent with \cite{Liang2024TWC}. According to \cite{Balle2018arXiv}, modeling these dependencies by introducing latent variable can enhance lossy image compression performance in noise-free environments. However, such dependencies may also render signals more resilient to interference when transmitting in noisy channels, as shown in Fig. 9 of \cite{Liang2024TWC}. Hence, dependencies can be regarded as a component of reconstruction-relevant information. Fig. \ref{Visual}(b) indicates that the decoding process exhibits the inverse trend compared with encoding process, transitioning from random to detailed textures.

\begin{figure}[htbp]
	\centering
	\subfigure[]{
		\label{Weight 1}
		\includegraphics[height=3.0cm, width=3.5cm] {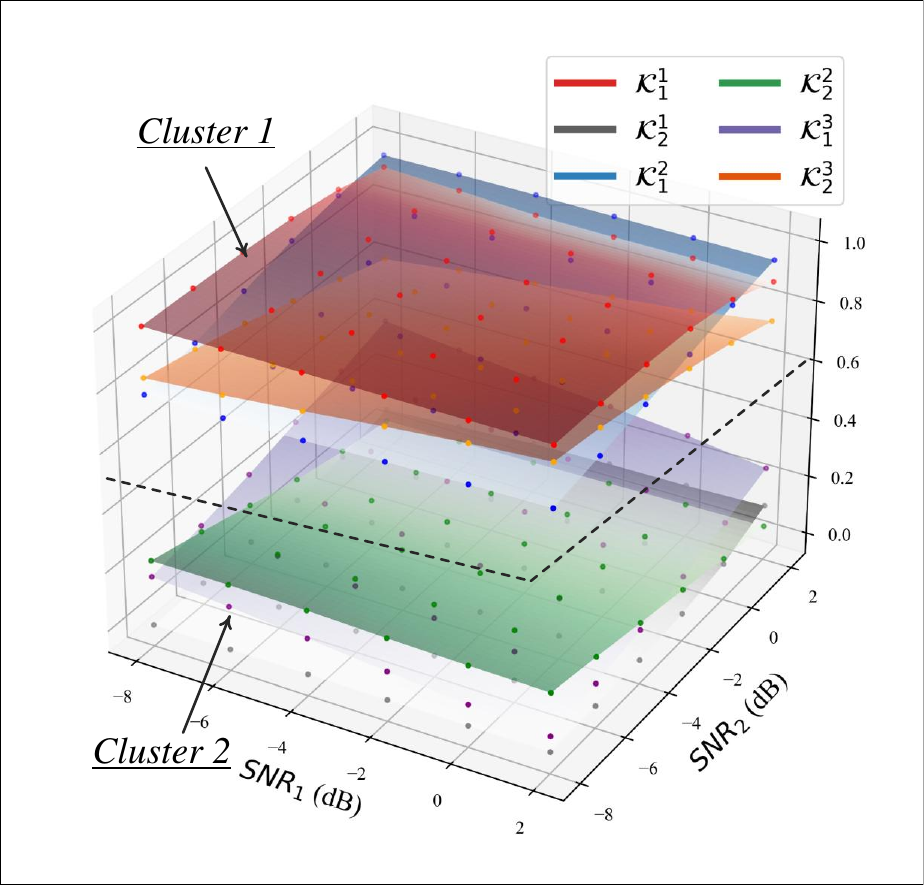}}
	\subfigure[]{
		\label{Weight 2}
		\includegraphics[height=3.0cm, width=3.2cm] {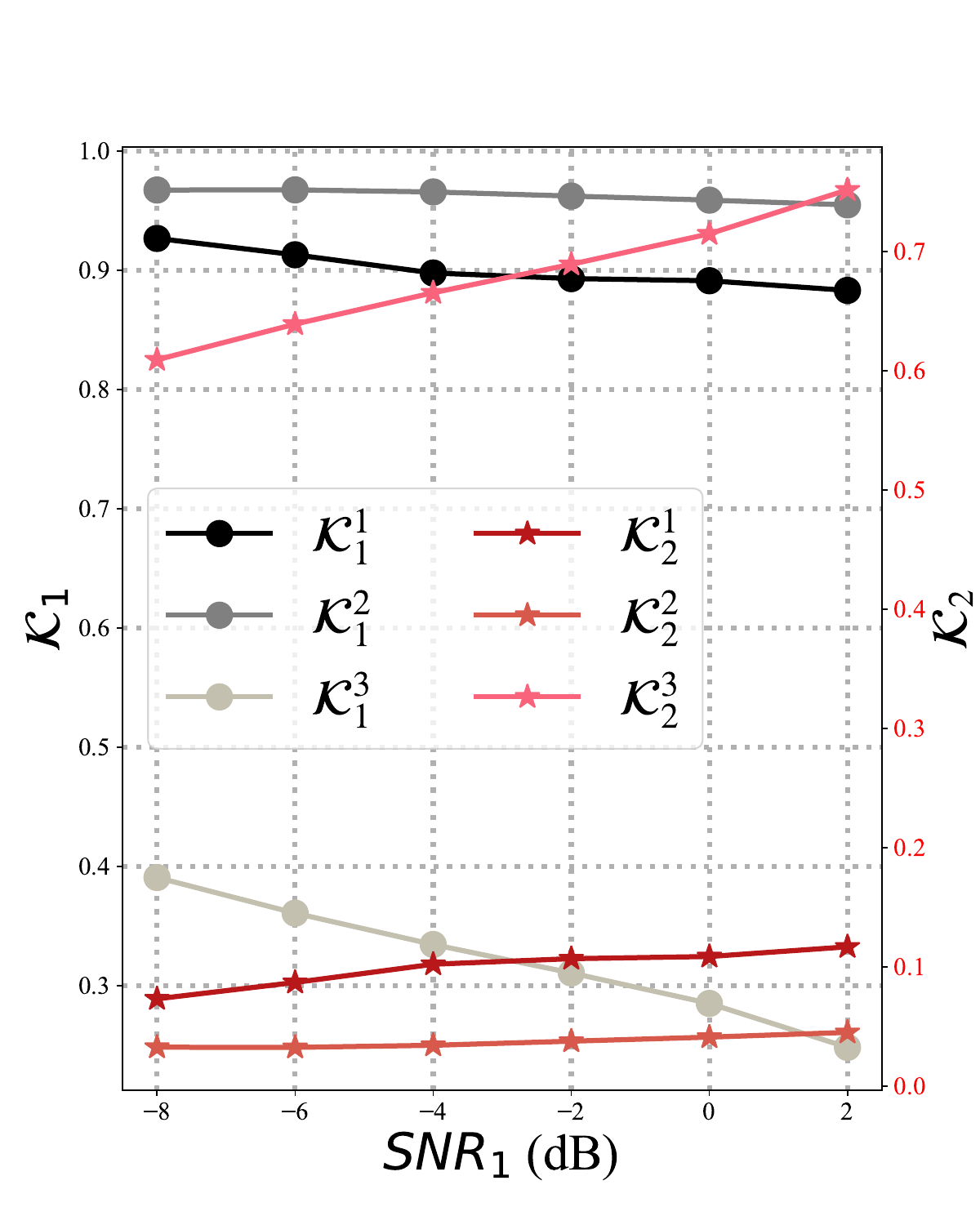}}
	\caption{(a) Weight allocation of $\mathcal{K}_1$ and $\mathcal{K}_2$ under different SNRs when $R$ = 1/6 and Cityscapes dataset is aopted. $\mathcal{K}_1^i$ and $\mathcal{K}_2^i$ respectively denote the weight assigned to complementarity and consistency of the $i$-$th$ stacked blocks $D_i$. {(b) Weight allocation of $\mathcal{K}_1$ and $\mathcal{K}_2$ when $SNR_2$ is fixed at 2 dB.}}
	\label{Weight}
\end{figure}

\subsection{Discussions on Dynamic Weight}
Fig \ref{Weight 1} gives the weights $\mathcal{K}_1$ and $\mathcal{K}_2$ under different SNRs. These weights are clearly divided into two clusters. It indicates that one weight tends to dominate. Specifically, Fig \ref{Weight 2} gives an interesting case where $SNR_2$ is fixed at 2 dB. It reveals that there are more complementary weight assigned in the shallow decoding blocks, i.e., $\mathcal{K}_1^1 \textgreater \mathcal{K}_2^1$. However, the opposite result appears in the deep layer decoding blocks, i.e., $\mathcal{K}_1^3 \textless \mathcal{K}_2^3$. In other words, the shallow decoding blocks focus on complementarity, while the deep decoding blocks focus on consistency. As shown in Fig \ref{Visual}(b), the shallow features are more random, with visible structure emerging only in the third decoding block $D_3$, which coincides with an increased emphasis on consistency. This phenomenon indicates that the shallow decoding blocks primarily act as denoisers, with consistency feature extraction {occurs} once the noise is sufficiently reduced. Furthermore, we observe that the deep decoding blocks allocate more resources to consistency as the SNR increases, suggesting that in lower noise conditions, utilizing correlated sources enhances performance.

\section{Conclusion} \label{sec:Conclusion}
In this paper, we propose a novel RDJSCC scheme, specifically designed for distributed image transmission under multi-path fading channel without perfect CSI. We aim to efficiently utilize noisy correlated sources to achieve collaborative recovery, striking a balance between complementarity and consistency. Based on the information-theoretic analysis, we find that designing flexible multi-view transmission framework to meet the requirements of consistency and complementarity can lead to the optimal reconstruction performance. Deploying CVIE and CCF at the decoder can strike a graceful trade-off between reconstruction complementarity and consistency. Meanwhile, we also verify that the low PAPR can coexist with performance in distributed image transmission system by setting proper clipping threshold. We believe that the proposed RDJSCC scheme will play a pivotal role in advancing distributed systems towards practical deployment.
\appendices

\section{DNN Architecture and Hyper-parameters} \label{sec:Architecture}
The DNN architectures are given in Fig. \ref{TWC_proposed}. The encoder and decoder are both composed of stacked residual blocks as \cite{Zhang2023TWC} shown in Fig. \ref{TWC_proposed}(b). Specifically, the encoder consists of three residual blocks. The latter two residual blocks performing downsampling twice. The encoders for the two views share the same parameters and the encoding process are independent, which refers to Slepian-Wolf theorem on distributed source coding \cite{Wolf1973TIT}. It proves that separate encoding and joint decoding of two or more correlated sources can theoretically achieve the same compression ratio as a joint encoding-decoding scheme under lossless compression, which has been extended to the lossy transmission scenario \cite{WynerZiv1976, Heegard1985}. 
The encoder is simpler compared with the decoder, because DSC enables low-complexity encoding by shifting a significant amount of computation to the decoder. Fig. \ref{TWC_proposed} shows that the decoder consists of five residual blocks while the encoder consists of three.
\begin{figure}[htbp]
	\centering
	\includegraphics[height=3cm, width=8cm] {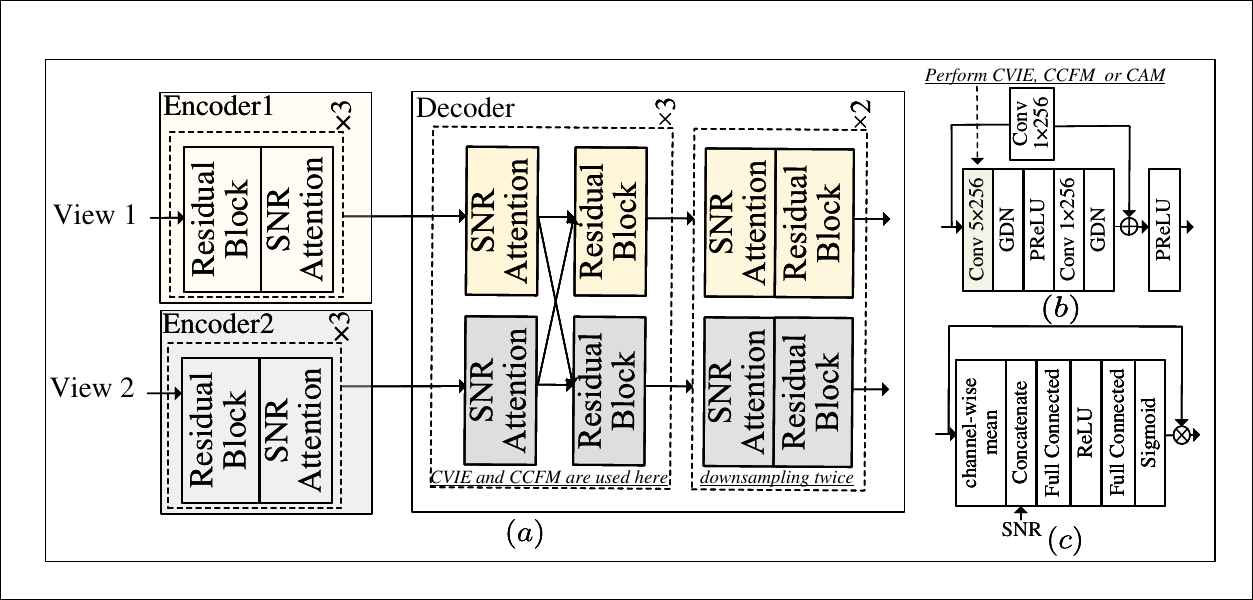}
	\caption{(a) ResNet-based encoder and decoder. (b) Stacked residual blocks. (c) SNR adaptive strategy. }
	\label{TWC_proposed}
\end{figure}

Meanwhile, we also introduce an SNR adaptive strategy to use a single $(\boldsymbol{\phi},\boldsymbol{\theta})$ pair for various SNRs shown in Fig. \ref{TWC_proposed}(c), which has been widely proved to be beneficial for channels with varying SNR\cite{Zhang2023TWC,Xu2022TCSVT,Wang2022ICASSP}. Specifically, channel-wise mean and SNR are concatenated. As shown in Fig. \ref{TWC_proposed}(a), CVIE and CCF are stacked before upsampling to avoid calculating the cross-view 
dependence on high-dimensional features. 

\section{The Impacts of Encoding Process on Correlation} \label{sec:Impacts}
As depicted in Section \ref{sec:Fusion}, analytically calculating the MI of multi-view encoded representations is non-trivial due to the high nonlinearity of the DNN. We also find that the MI of received encoded representations between two views can be roughly predicted according to the SCS. Here, we attempt to analyse the impacts of encoding process on correlation based on cosine similarity. We use a canonical correlation analysis (CCA) \cite{Hardoon2004TWC} to have a further insight of the discussions in Section \ref{sec:Correlation}. 

By definition, the cosine similarity between two views is given by
\begin{equation}
	\label{SCS_eq}
	\mathrm{cos}(\psi) = \frac{\langle \mathbf{s}_1^j, \mathbf{s}_2^j \rangle}{\|\mathbf{s}_1^j\| \|\mathbf{s}_2^j\|},
\end{equation}
where $\psi$ denotes the angle between two vectors. Here, the correlated sources from two views are treated as vectors in a common feature space. Since the encoding process $f(\cdot; \boldsymbol{\phi})$ is nonlinear (composed of multiple sub-nonlinear transformation $E_j$), we resort to kernel CCA analysis. Kernel CCA offers an analytical framework by projecting the data into a higher dimensional feature space with a fixed nonlinear mapping as $\Phi:\mathbf{s}\mapsto\Phi(\mathbf{s})$ \cite{Hardoon2004TWC},  
{where $\Phi$ is a mapping from the input space $\mathbf{S}$ to feature space $\mathbf{F}$.}

\begin{assumption}
	\label{assumption3}
    $E_j$ can be integrated into the kernel function or have a compatible kernel representation, thus we can redefine the kernel as $K_j(\mathbf{s}_1,\mathbf{s}_2)=E_j(\Phi(\mathbf{s}_1))\cdot E_j(\Phi(\mathbf{s}_2))$.
\end{assumption}

Under the Assumption \ref{assumption3}, the cosine similarity of two views can be expressed as
\begin{equation}
    \begin{aligned} 
    \mathrm{cos}(\psi_1)=\frac{K_j(\mathbf{s}_1,\mathbf{s}_2)}{\sqrt{K_j(\mathbf{s}_1,\mathbf{s}_1)}\sqrt{K_j(\mathbf{s}_2,\mathbf{s}_2)}},
    \end{aligned} 
\end{equation}
where $\psi_1$ is the angle after transformation. By the Lagrange multiplier method, we can maximize the cosine similarity and obtain a generalised eigenproblem with the form $A\mathbf{x} = \lambda\mathbf{x}$\cite{Hardoon2004TWC}. Under this case, $E_j=\frac1\lambda {K_j(\mathbf{s}_2,\mathbf{s}_2)}^{-1}{K_j(\mathbf{s}_1,\mathbf{s}_1)}$ is optimal. As a
conclusion, the changes in cosine similarity are related to the selection of kernel function. Perfect correlation can be formed when kernel function is invertible, 

It is worth mentioning that when $E_j$ is a linear transformation, the cosine similarity can be expressed as 
\begin{equation}
    \begin{aligned} 
    \mathrm{cos}(\psi_1)=\frac{({\mathbf{s}_1^j})^{T}\Sigma ({\mathbf{s}_2^j})}{\sqrt{({\mathbf{s}_1^j})^{T}\Sigma ({\mathbf{s}_1^j})({\mathbf{s}_2^j})^{T}\Sigma ({\mathbf{s}_2^j})}},
    \end{aligned} 
\end{equation}
where $\Sigma={E_j}^T{E_j}$. The optimal cosine similarity can be computed by the singular value decomposition \cite{whuber2016}.

\begin{IEEEbiography}[{\includegraphics[width=1in,height=1.25in, clip,keepaspectratio]{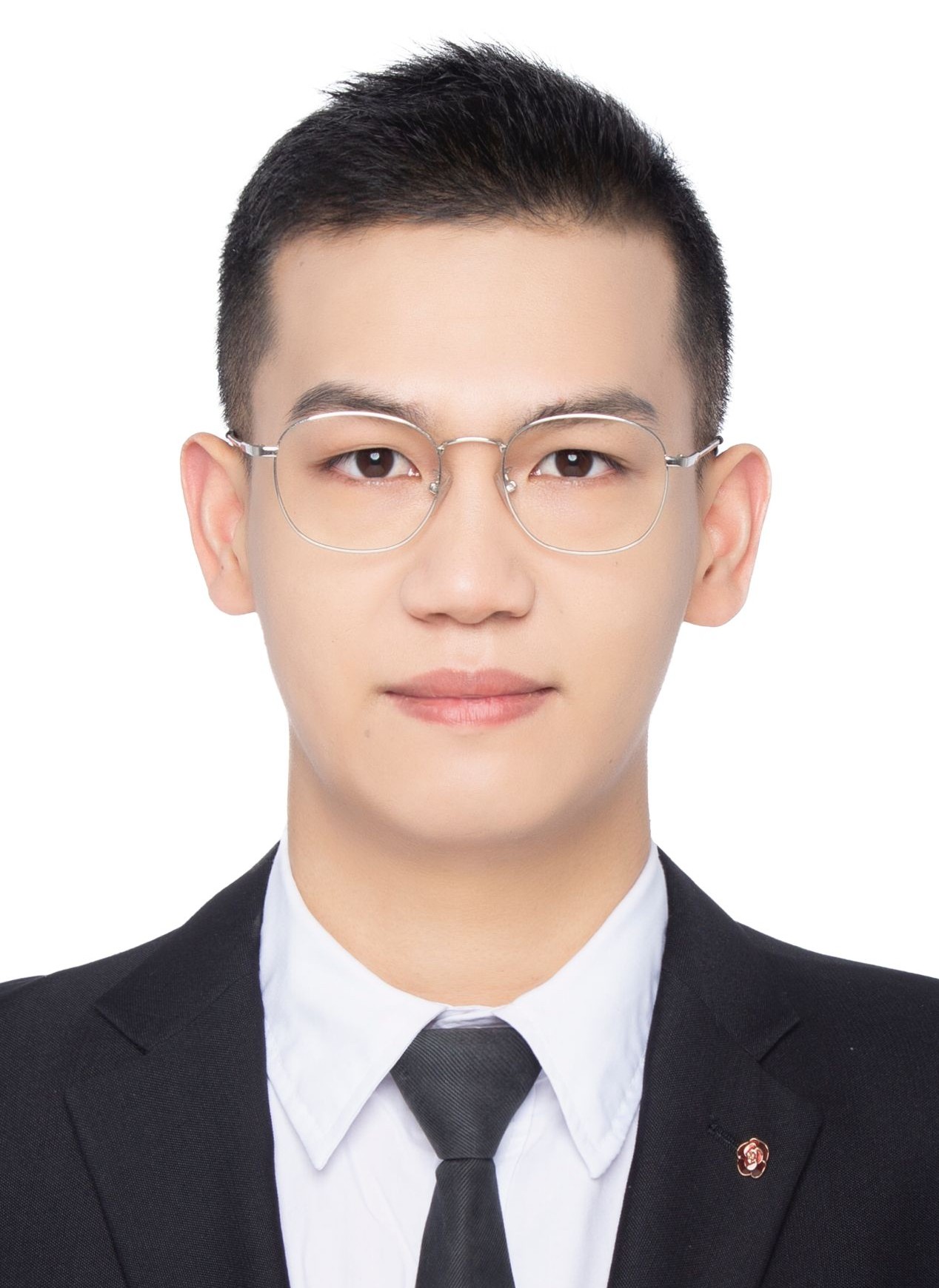}}]{Biao Dong} received the M.Eng. degree from Nanjing University of Posts and Telecommunications, Nanjing, China, in 2020. He is currently pursuing the Ph.D. degree in Harbin Institute of Technology (Shenzhen), Shenzhen, China. 

His research interests include wireless communications, signal processing and machine learning.
\end{IEEEbiography}

\begin{IEEEbiography}[{\includegraphics[width=1in,height=1.25in, clip,keepaspectratio]{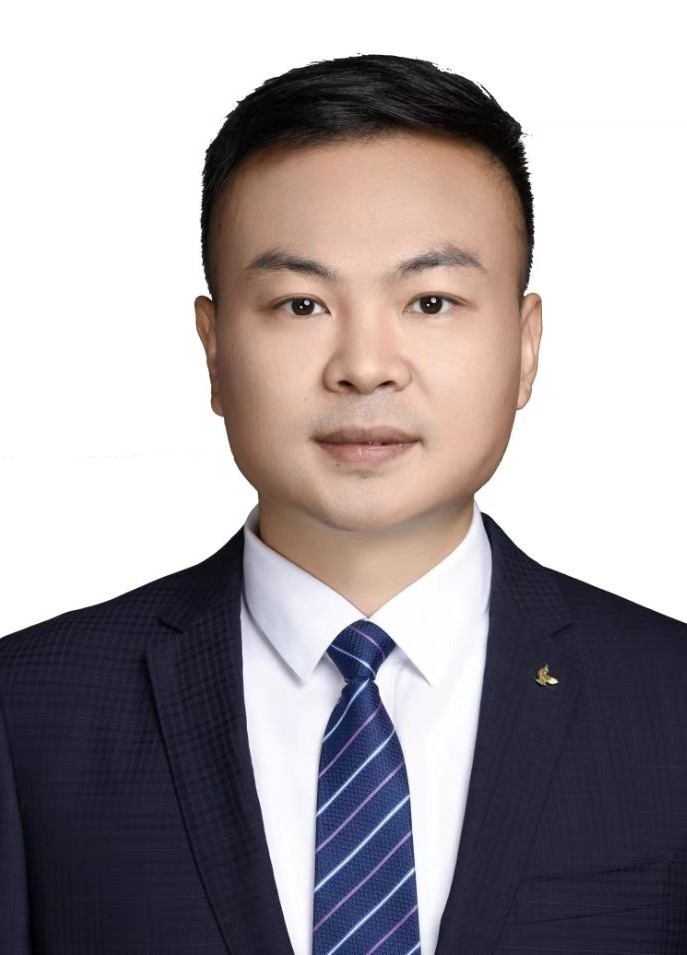}}]{Bin Cao} (Member, IEEE) received the Ph.D. degree in information and communication engineering from Harbin Institute of Technology, Shenzhen, China, in 2013. He is currently an Associate Professor with the School of Electronics and Information Engineering, Harbin Institute of Technology. From 2010 to 2012, he was a Visiting Scholar with the University of Waterloo, Waterloo, ON, Canada. 

His research interests include signal processing for wireless
communications, cognitive radio networking, and resource allocation for wireless networks.
\end{IEEEbiography}

\begin{IEEEbiography}
[{\includegraphics[width=1in,height=1.25in,clip,keepaspectratio]{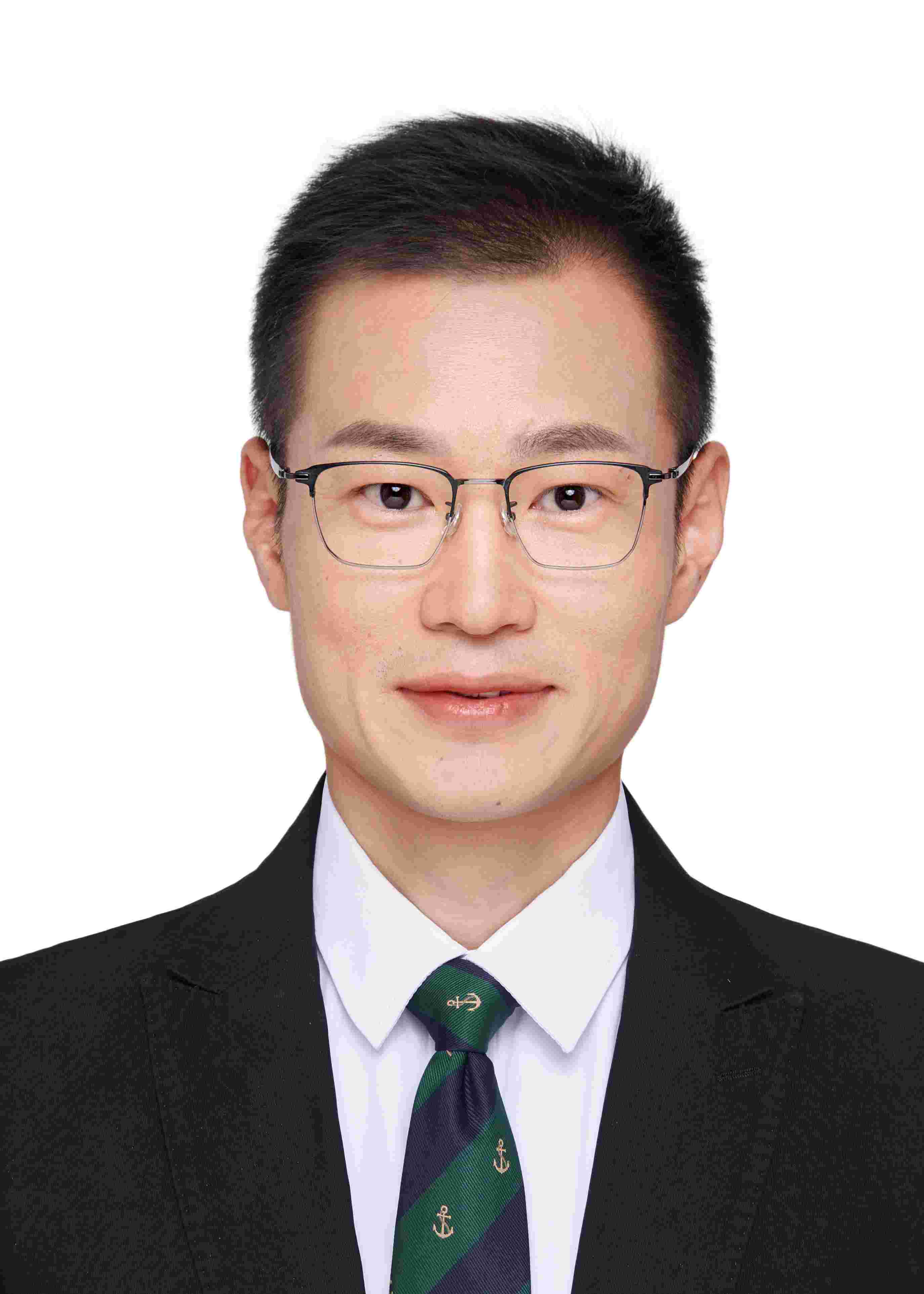}}]
{Guan Gui} (Fellow, IEEE) received the Ph.D. degree from the University of Electronic Science and Technology of China, Chengdu, China, in 2012. From 2009 to 2014, he joined Tohoku University as a research assistant and postdoctoral research fellow. From 2014 to 2015, he was an Assistant Professor at the Akita Prefectural University, Akita, Japan. Since 2015, he has been a professor at Nanjing University of Posts and Telecommunications, Nanjing, China. His recent research interests include intelligence sensing and recognition, intelligent signal processing, and physical layer security. Dr. Gui has published over 200 IEEE Journal/Conference papers and won several best paper awards, e.g., ICC 2017, ICC 2014 and VTC 2014-Spring. He received the IEEE Communications Society Heinrich Hertz Award in 2021, the Clarivate Analytics Highly Cited Researcher in Cross-Field in 2021-2024, the Member and Global Activities Contributions Award in 2018, the Top Editor Award of IEEE Transactions on Vehicular Technology in 2019. Since 2022, he has been a Distinguished Lecturer of the IEEE Vehicular Technology Society. He serves or serves on the editorial boards of several journals, such as IEEE Transactions on Information Forensics and Security, IEEE Internet of Things Journal, and IEEE Transactions on Vehicular Technology. In addition, he served as the IEEE VTS Ad Hoc Committee Member in AI Wireless, Executive Chair of IEEE ICCT 2023, Executive Chair of VTC 2021-Fall, and Vice Chair of WCNC 2021.
\end{IEEEbiography}

\begin{IEEEbiography}[{\includegraphics[width=1in,height=1.25in, clip,keepaspectratio]{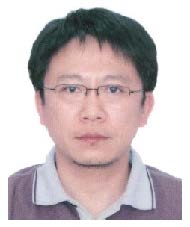}}]{Qinyu Zhang} (Senior Member, IEEE) received the bachelor’s degree in communication engineering from Harbin Institute of Technology (HIT), Harbin, China, in 1994, and the Ph.D. degree in biomedical and electrical engineering from the University of Tokushima, Tokushima, Japan, in 2003. From 1999 to 2003, he was an Assistant Professor with the University of Tokushima. From 2003 to 2005, he was an Associate Professor with Shenzhen Graduate School, HIT. He was the Founding Director of the Communication Engineering Research Center, School of Electronic and Information Engineering (EIE). Since 2005, he has been a Full Professor and the Dean of the EIE School, HIT. His research interests include aerospace communications and networks, wireless communications and networks, cognitive radios, signal processing, and biomedical engineering. He was the Associate Chair of Finance of the International Conference on Materials and Manufacturing Technologies in 2012. He was the TPC Co-Chair of the IEEE/CIC ICCC 2015. He was the Symposium Co-Chair of the CHINACOM 2011 and the IEEE Vehicular Technology Conference 2016 (VTC Spring). He was the Founding Chair of the IEEE Communications Society Shenzhen Chapter. He is on the editorial board of some academic journals, such as Journal of Communication, KSII Transactions on Internet and Information Systems, and Science China Information Sciences. He has been a TPC Member of the Infocom, IEEE ICC, IEEE GLOBECOM, IEEE Wireless Communications and Networking Conference, and other flagship conferences in communications.
\end{IEEEbiography}


\begin{thebibliography}{11}
\bibitem{Dong2024GC}
B. Dong, B. Cao, W. Tian and Y. Wang, ``RDJSCC: Robust deep joint source-channel coding enabled distributed image transmision over severe fading channel,'' in \emph{Proc. IEEE Global Comm. Conf.}, Cape Town, South Africa, Dec. 2024, pp. 986--991.


\bibitem{Gündüz2022JSAC}
D. Gündüz, Z. Qin, I. E. Aguerri, H. S. Dhillon, Z. Yang, A. Yener, K. K. Wong, and C.-B. Chae, ``Beyond transmitting bits: Context, semantics, and task-oriented communications,'' \emph{IEEE J. Select. Areas Commun.}, vol. 41, no. 1, pp. 5--41, Nov. 2022.


\bibitem{Wolf1973TIT}
D. Slepian and J. Wolf, ``Noiseless coding of correlated information sources,'' \emph{IEEE Trans. Inf. Theory}, vol. 19, no. 4, pp. 471-480, Jul. 1973.


\bibitem{WynerZiv1976}
A. Wyner and J. Ziv, ``The rate-distortion function for source coding with side information at the decoder,'' \emph{IEEE Trans. Inf. Theory}, vol. 22, no. 1, pp. 1--10, Jan. 1976.

\bibitem{Heegard1985}
C. Heegard and T. Berger, ``Rate distortion when side information may be absent,'' \emph{IEEE Trans. Inf. Theory}, vol. 31, no. 6, pp. 727--734, Nov. 1985.


\bibitem{Hu2021}
Q. Hu, F. Gao, H. Zhang, S. Jin, and G. Y. Li, ``Deep learning for channel estimation: Interpretation, performance, and comparison,'' \emph{IEEE Trans. Wireless Commun.}, vol. 20, no. 4, pp. 2398--2412, Apr. 2021.


\bibitem{Dong2022}
B. Dong \emph{et al}., ``A lightweight decentralized-learning-based automatic modulation classification method for resource-constrained edge devices,'' \emph{IEEE Internet Things J.}, vol. 9, no. 24, pp. 24708--24720, Dec. 2022.

\bibitem{Liu2022}
Y. Liu, S. Zhang, X. Mu, Z. Ding, R. Schober, N. Al-Dhahir,
E. Hossain, and X. Shen, ``Evolution of NOMA toward next generation multiple access (NGMA) for 6G,'' \emph{IEEE J. Select. Areas Commun.}, vol. 40, no. 4, pp. 1037--1071, Apr. 2022.

\bibitem{Liu2019ICCV}
J. Liu, S. Wang, and R. Urtasun, ``DSIC: Deep stereo image compression,'' in \emph{Proc. IEEE/CVF Int. Conf. Comput. Vis. (ICCV)}, Seoul, Korea, 2019, pp. 3136--3145.

\bibitem{Mital2022DCC}
N. Mital, E. Özyilkan, A. Garjani, and D. Gündüz, ``Neural distributed image compression using common information,'' in \emph{Proc. IEEE Data Compression Conf.}, Mar. 2022, pp. 182--191.

\bibitem{Li2024NIPS}
P. Li, S. K. Ankireddy, R. Zhao, H. N. Mahjoub, E. M. Pari, U. Topcu, S. P. Chinchali, and H. Kim, ``Task-aware distributed source coding under dynamic bandwidth,'' in \emph{Proc. Adv. Neural Inf. Process. Syst.}, vol. 36, 2024.
\bibitem{Bourtsoulatze2019TCCN}
E. Bourtsoulatze, D. B. Kurka, and D. Gündüz, ``Deep joint source channel coding for wireless image transmission,'' \emph{IEEE Trans. Cognit. Commun. Netw.}, vol. 5, no. 3, pp. 567--579, Sep. 2019.

\bibitem{Xu2022TCSVT}
J. Xu, B. Ai, W. Chen, A. Yang, P. Sun, and M. Rodrigues, ``Wireless image transmission using deep source channel coding with attention modules,'' \emph{IEEE Trans. Circuits Syst. Video Technol.}, vol. 32, no. 4, pp. 2315--2328, Apr. 2022.

\bibitem{Zhang2023TWC}
W. Zhang, H. Zhang, H. Ma, H. Shao, N. Wang, and V. C. M. Leung, ``Predictive and adaptive deep coding for wireless image transmission in semantic communication,'' \emph{IEEE Trans. Wireless Commun.}, vol. 22, no. 8, pp. 5486--5501, Aug. 2023.
\bibitem{Shao2023WCL}
Y. Shao and D. Gündüz, ``Semantic communications with discrete-time analog transmission: A PAPR perspective,'' \emph{IEEE Wireless Commun. Lett.}, vol. 12, no. 3, pp. 510--514, Mar. 2023.

\bibitem{Wu2022WCL}
H. Wu, Y. Shao, K. Mikolajczyk, and D. Gündüz, ``Channel-adaptive wireless image transmission with OFDM,'' \emph{IEEE Wireless Commun. Lett.}, vol. 11, no. 11, pp. 2400--2404, Nov. 2022.

\bibitem{Yang2022TCCN}
M. Yang, C. Bian, and H.-S. Kim, ``OFDM-guided deep joint source channel coding for wireless multipath fading channels,'' \emph{IEEE Trans. Cogn. Commun. Netw.}, vol. 8, no. 2, pp. 584--599, Jul. 2022.


\bibitem{Dai2022JSAC}
J. Dai, S. Wang, K. Tan, Z. Si, X. Qin, K. Niu, and P. Zhang, ``Nonlinear transform source-channel coding for semantic communications,'' \emph{IEEE J. Select. Areas Commun.}, vol. 40, no. 8, pp. 2300--2316, Aug. 2022.



\bibitem{Yilmaz2023arx}
S. Yilmaz, E. Özyilkan, D. Gündüz, and E. Erkip, ``Distributed deep joint source-channel coding with decoder-only side information,'' in \emph{Proc. IEEE Int. Conf. Mach. Learn. Commun. Netw. (ICMLCN)}, May 2024, pp. 139--144.

\bibitem{Wang2022ICASSP}
S. Wang, K. Yang, J. Dai, and K. Niu, ``Distributed image transmission using deep joint source-channel coding,'' in \emph{Proc. IEEE Int. Conf. Acoust., Speech Signal Process. (ICASSP)}, Apr. 2022, pp. 5208--5212.


\bibitem{Li2022TKD}
J. Li, W. Qiang, C. Zheng, B. Su, F. Razzak, J. Wen, and H. Xiong, ``Modeling multiple views via implicitly preserving global consistency and local complementarity,'' \emph{IEEE Trans. Knowl. Data Eng.}, vol. 35, no. 7, pp. 7220--7238, Jul. 2023.

\bibitem{Yong2010book}
Y. S. Cho, J. Kim, W. Y. Yang, and C. G. Kang, ``MIMO-OFDM wireless communications with MATLAB,'' Hoboken, NJ, USA: Wiley, 2010.



\bibitem{Balle2018arXiv}
J. Ballé, D. Minnen, S. Singh, S. J. Hwang, and N. Johnston, ``Variational image compression with a scale hyperprior,'' \emph{arXiv preprint arXiv:1802.01436}, 2018.

\bibitem{Thomas2006}
M. Thomas and A. T. Joy, ``Elements of information theory,'' WileyInterscience, 2006.

\bibitem{Zhang2021rXiv}
A. Zhang, Z. C. Lipton, M. Li, and A. J. Smola, ``Dive into deep learning,'' \emph{arXiv preprint arXiv:2106.11342}, 2021.

\bibitem{Pan2022CVPR}
X. Pan, C. Ge, R. Lu, S. Song, G. Chen, Z. Huang, and G. Huang, ``On the integration of self-attention and convolution,'' in \emph{Proc. IEEE/CVF Conf. Comput. Vis. Pattern Recognit. (CVPR)}, Jun. 2022, pp. 805--815.







\bibitem{Cordts2016CVPR}
M. Cordts, M. Omran, S. Ramos, T. Rehfeld, M. Enzweiler,
R. Benenson, U. Franke, S. Roth, and B. Schiele, ``The cityscapes dataset for semantic urban scene understanding,'' in \emph{Proc. IEEE Conf. Comput. Vision Pattern Recognit. (CVPR)}, Jun. 2016, pp. 3213--3223.

\bibitem{Geiger2012}
A. Geiger, P. Lenz, and R. Urtasun, ``Are we ready for autonomous driving? The KITTI vision benchmark suite,'' in \emph{Proc. IEEE Conf. Comput. Vision Pattern Recognit. (CVPR)}, 2012, pp. 3354--3361.

\bibitem{Wang2004ACS}
Z. Wang, E. P. Simoncelli, and A. C. Bovik, ``Multiscale structural similarity for image quality assessment,'' in \emph{Proc. 37th Asilomar Conf. Signals, Syst. Comput.}, vol. 2, Nov. 2004, pp. 1398--1402.

\bibitem{WF2023WCL}
W. F. Lo, N. Mital, H. Wu, and D. Gündüz, ``Collaborative semantic communication for edge inference,'' \emph{IEEE Wireless Commun. Lett.}, vol. 12, no. 7, pp. 1125--1129, Jul. 2023.

\bibitem{Nguyen2024PACC}
H. V. Nguyen and L. Bai, ``Cosine similarity metric learning for face
verification,'' in \emph{Proc. Asian Conf. Comput. Vis.}, 2010, pp. 709--720.




\bibitem{zhang2023ICLR}
X. Zhang, J. Shao, and J. Zhang, ``LDMIC: Learning-based distributed multi-view image coding,'' in \emph{Proc. Int. Conf. Learn. Represent.}, 2024.



\bibitem{Liang2024TWC}
H. Liang, K. Liu, X. Liu, H. Jiang, C. Dong, X. Xu, K. Niu, and P. Zhang, ``Orthogonal model division multiple access,'' \emph{IEEE Trans. Wireless Commun.}, vol. 23, no. 9, pp. 11693-11707, Sept. 2024



\bibitem{Hardoon2004TWC}
D. R. Hardoon, S. Szedmak, and J. Shawe-Taylor, ``Canonical correlation analysis: An overview with application to learning methods,'' \emph{Neural Computation}, vol. 16, no. 12, pp. 2639--2664, 1 Dec. 2004.

\bibitem{whuber2016}
whuber, ``How does cosine similarity change after a linear transformation?", \emph{Cross Validated}, 2016. [Online]. Available: \url{https://stats.stackexchange.com/q/206083}.

\bibitem{Shao2025}
Y. Shao, S. C. Liew, and D. Gündüz, ``Denoising noisy neural networks: A Bayesian approach with compensation,'' \emph{IEEE Trans. Signal Process}, vol. 71, pp. 2460–2474, 2023.

\bibitem{Yoo2006}
T. Yoo and A. Goldsmith, ``Capacity and power allocation for fading MIMO channels with channel estimation error,'' \emph{IEEE Trans. Inf. Theory},  vol. 52, no. 5, pp. 2203–2214, 2006.

\bibitem{Kay1993}
S. M. Kay, ``Fundamentals of statistical signal processing: Estimation theory,'' Englewood Cliffs, NJ, USA: Prentice-Hall, 1993.




%









\end{thebibliography}
\end{document}